\documentclass[12pt]{article}
\usepackage[authoryear]{natbib}
\usepackage{amsmath,amssymb,amsfonts,amsthm}
\usepackage{epic,eepic,epsfig,longtable}
\usepackage{multirow,verbatim}
\usepackage{array}
\usepackage{comment}
\usepackage{float}
\usepackage{epstopdf}
\usepackage{epsfig}
\usepackage{subfigure}
\usepackage{extarrows}
\usepackage{setspace}
\usepackage{color}

 \newenvironment{tabnote}{\par\tabnotefont}{\par}
 \def\tabnotefont{\fontsize{10}{10}\selectfont}%
\textheight 8.5 in
\textwidth 6.5 in
\topmargin -0.5 in
\oddsidemargin -0.1 in

\makeatletter
\def\singlespace{\def\baselinestretch{1}\@normalsize}

\makeatletter
\def\singlespace{\def\baselinestretch{1}\@normalsize}


\numberwithin{equation}{section}
\renewcommand{\hat}{\widehat}

\renewcommand{\hat}{\widehat}

\newcommand{\bfm}[1]{\ensuremath{\mathbf{#1}}}

\def\ba{\bfm a}   \def\bA{\bfm A}  
\def\bb{\bfm b}   \def\bB{\bfm B}  
\def\bc{\bfm c}   \def\bC{\bfm C}  
\def\bd{\bfm d}   \def\bD{\bfm D}  
\def\be{\bfm e}     
\def\bff{\bfm f}  \def\bF{\bfm F}  
\def\bg{\bfm g}   \def\bG{\bfm G}  
   \def\bH{\bfm H}  
   \def\bI{\bfm I}

   \def\bL{\bfm L}  
   \def\bM{\bfm M}

   \def\bP{\bfm P}

   \def\bS{\bfm S}  
     
\def\bu{\bfm u}   \def\bU{\bfm U}  
\def\bv{\bfm v}   \def\bV{\bfm V}  
\def\bw{\bfm w}   \def\bW{\bfm W}  
\def\bx{\bfm x}   \def\bX{\bfm X}  
\def\by{\bfm y}   \def\bY{\bfm Y}  
\def\bz{\bfm z}   \def\bZ{\bfm Z}

\newcommand{\bfsym}[1]{\ensuremath{\boldsymbol{#1}}}

 \def\balpha{\bfsym \alpha}
 \def\bbeta{\bfsym \beta}			 \def\bPhi{\bfsym \Phi}
 \def\bgamma{\bfsym \gamma}             
 \def\bdelta{\bfsym {\delta}}

 \def\bnu{\bfsym {\nu}}
 \def\btheta{\bfsym {\theta}}           
           \def\bepsilon{\bfsym \varepsilon}
 \def\bsigma{\bfsym \sigma}             \def\bSigma{\bfsym \Sigma}
 \def\blambda {\bfsym {\lambda}}        
           
 			\def\bphi{\bfsym \phi}
 			
 \def\bxi{\bfsym {\xi}}			 \def\bXi{\bfsym {\Xi}}
 \def\bzeta{\bfsym {\zeta}}	     \def\bell{\bfsym {\ell}}	
  \def\bvarepsilon{\bfsym {\varepsilon}}



\DeclareMathOperator{\argmin}{argmin}

\DeclareMathOperator{\cov}{cov}

\DeclareMathOperator{\diag}{diag}
\DeclareMathOperator{\E}{\mathbb E}

\DeclareMathOperator{\rank}{rank}

\DeclareMathOperator{\sgn}{sgn}

\DeclareMathOperator{\Var}{Var}

\DeclareMathOperator{\tr}{tr}



\def\newpage{\vfill\eject}

\def\today{\ifcase\month\or
  January\or February\or March\or April\or May\or June\or
  July\or August\or September\or October\or November\or December\fi
  \space\number\day, \number\year}

\newdimen\biblioindent    \biblioindent=30pt

 at 8truept

\def\sgn{\mbox{sgn}}

\newcommand{\beq}{\begin{equation}}
  \newcommand{\eeq}{\end{equation}}
\newcommand{\beqn}{\begin{eqnarray}}
  \newcommand{\eeqn}{\end{eqnarray}}
\newcommand{\beqnn}{\begin{eqnarray*}}
  \newcommand{\eeqnn}{\end{eqnarray*}}

\allowdisplaybreaks
\setcounter{section}{0}

\setcounter{page}{1}
\usepackage{verbatim}
\pagestyle{plain}


\renewcommand{\baselinestretch}{1.66}
\baselineskip=22pt


\numberwithin{equation}{section}
\theoremstyle{plain}
\newtheorem{thm}{Theorem}[section]

\newtheorem{ass}{Assumption}[section]
\theoremstyle{definition}

\newtheorem{algo}{Algorithm}[section]

\newcounter{CondCounter}


\usepackage{xr}
\externaldocument{supp_revise}
\pdfminorversion=4

\begin{document}

\title{\LARGE Recent Developments on Factor Models and its Applications in Econometric Learning}
  \author{  Jianqing Fan\thanks{Department of Operations Research and Financial Engineering, Princeton University, Princeton, NJ 08544, USA.   \texttt{jqfan@princeton.edu}. His research is supported by NSFC grants 71991470 and 71991471. }\and  Kunpeng Li\thanks{ International school of economics and management, Capital University of Economics and Business, Beijing 100070, China; email:  \texttt{kunpenglithu@126.com}  }\and  Yuan Liao\thanks{Department of  Economics, Rutgers University, 75 Hamilton St., New Brunswick, NJ 08901, USA. \texttt{yuan.liao@rutgers.edu}}}
\date{}

\maketitle

 \singlespacing

 \begin{abstract}
This paper makes a selective survey on the recent development of the factor model and its application on statistical learnings.  We focus on the perspective of the low-rank structure of factor models,
and particularly draws attentions to estimating the model from the low-rank recovery point of view. The survey mainly consists of three parts: the first part is a review on new factor estimations based on modern techniques on recovering low-rank structures of high-dimensional models. The second part discusses statistical inferences of several factor-augmented models and applications in econometric learning models. The final part summarizes new developments dealing with unbalanced panels from the matrix completion perspective.



 \end{abstract}


{\small Key words:  factor models, spiked low rank matrix, matrix completion, unbalanced panel, multiple testing, high-dimensional
 
}

 \newpage

\tableofcontents

\onehalfspacing

 

\section{Introduction}

The recent decade has witnessed a blossom of  developments on statistical learning theories and practice, embraced with the exciting progresses on large-scale optimizations and dimension reduction techniques. Factor models, as one of the central machinery on summarizing and extracting information from large scale datasets, have    received much attention  in this revolutionary   era  of data science, and many breakthrough methodologies and applications have been developed in this exciting area.

This paper makes a selective overview on the recent developments of the factor model and its applications on econometric learning.  Our review focuses on the  perspective of the low-rank structure of factor models,
and  draws particular attentions to estimating the model from the low-rank recovery point of view.
A central focus in the progress of this literature is the understanding and recovering \textit{low-rank structures} of high-dimensional models. Many  new learning theories and methods have been developed, which have revolutionized the modern understanding of econometric modeling. Meanwhile, the low-rank structure is one of the key properties of factor models. While this structure has  long  been aware of by researchers, 
studying the factor model from the perspective of low-rank matrix recovery is relatively new, and has led to many exciting  new  discoveries and understanding.

The survey mainly consists of three parts: the first part is a review on new factor estimation based on modern techniques on recovering low-rank structures of high-dimensional models. The second part discusses statistical inferences of several factor-augmented models and applications in statistical learning models. The final part summarizes new developments dealing with unbalanced panels from the matrix completion perspective.

We concentrate on recent developments on methodologies and applications in econometric learning.
For a more comprehensive account on this topic, see Chapters 9-11 of the book by \cite{fan2020statistical}.
Meanwhile, several important topics are not covered in this survey, but have also generated extensive researches in the literature. Those include selecting the number of factors, weak factors, identification, continuous-time and time-varying models, nonstationarity and
structural breaks, Bayesian methods, bootstrap factors, as well as  more sophisticated panel data models.
Several excellent reviews have been written with  emphasis on these topics. For those reviews, we refer to
\cite{stock2016dynamic} for dynamic factor models with applications on macroeconomics, to \cite{bai2016econometric} for time series  and  panel data models, and to \cite{gagliardini2019estimation} for a recent review on conditional factor models with applications to finance.  Another class of estimation is a hybrid  of  PCA-method and the state space approach, see  \cite{giannone2008nowcasting} and \cite{doz2011two} for more discussions.  In addition, the generalized dynamic factor model is another important strand of literature, where factors are often estimated using the  dynamic principal components,  the frequency domain analog of principal components,  developed by
\cite{brillinger1964frequency}. \cite{forni, forni2005generalized}   provided rates of convergence  of the common component estimated by dynamic principal components. Finally, we refer to   the following papers
for  more detailed   developments,  among others:
 \cite{bai2002determining, AH, onatski2010determining, li2017determining},  \cite{bai2012statistical, bai2016maximum},  \cite{onatski2012asymptotics, chudik2011weak},     \cite{cheng2016shrinkage, massacci2017least, gagliardini2016time, gonccalves2018bootstrapping, baltagi2017identification,barigozzi2018simultaneous},   \cite{ait2017using, chen2019five, liao2018uniform,li2019jump, pelger2019large}, \cite{su2017time}.

We  use the following notation.
For a matrix $\bA$, let $\lambda_i(\bA)$ denote the $i$ th largest singular value of $\bA$ and  use $\lambda_{\min}(\bA)$ and $\lambda_{\max}(\bA)$ to denote its smallest and largest eigenvalues. We define  the Frobenius norm $\|\bA\|_F=\sqrt{\tr(\bA'\bA)}$,  the operator norm $\|\bA\|=\sqrt{\lambda_{\max}(\bA'\bA)}$, the element-wise norm $\|\bA\|_\infty=\max_{ij}|A_{ij}|$, and the matrix $\ell_1$-norm $\|\bA\|_{\ell_1}:=\max_{i\leq N}\sum_{j=1}^N|A_{ij}|$.  In addition, define projection matrices $\bP_\bA=\bA(\bA'\bA)^{-1}\bA$ and $\bM_\bA=\bI-\bP_\bA$  when $\bA'\bA$ is   invertible.     Finally, for two (random) sequences $a_T$ and $b_T$, we write $a_T\ll b_T$ (or $b_T\gg a_T$) if $a_T=o_P(b_T)$.

\section{Spiked Incoherent Low-Rank Models}

\subsection{The model}\label{sec: genesec}

Modern high-dimensional factor models can be viewed as a type of \textit{spiked incoherent low-rank model}, a broad class of models that have drawn active research in the recent decade. A  spiked incoherent low-rank model typically refers to a   large matrix $\bSigma$ (either observable or not), having the following decomposition:
\begin{equation}\label{eq2.1}
\bSigma=\bL+\bS.
\end{equation}
Such decomposition  satisfies the following three properties:

\begin{description}

\item[(i) Low-rank.] The rank of  $\bL$ is  either bounded or grows very slowly compared to its dimensions.

\item[(ii) Spikedness.] The nonzero singular values of $\bL$ grow fast, while the largest singular value of $\bS$ is either bounded or grows much slower.

\item[(iii) Incoherence.] (also known as ``pervasiveness")  The left and right singular vectors of $\bL$, corresponding to the nonzero singular values, should have diversified elements, which means, elements of the rescaled singular vectors should be uniformly bounded.

\end{description}

The low-rank structure achieves dimension reductions: suppose the matrix $\bSigma$ is of $N\times N_1$ dimensions, while the rank of $\bL$ is $r$. Then the low-rank structure reduces the dimension from $O(NN_1) $ to $O(N+N_1)r$; the latter is the magnitude of the number of parameters in $\bL$.  Meanwhile, the spikedness helps seperate $\bL$ from $\bS$ approximately, and ensures that the large ``signals" concentrate on $\bL$, the low rank component. Finally, the incoherence, a condition that excludes matrices being low-rank and sparse simultaneously, enables us to estimate well the singular eigenvectors.

We explain these three properties using the matrix form of factor models. Consider
\begin{equation} \label{eqjf06}
y_{it}=\bb_i'\bff_t+ u_{it},\quad i\leq N, \quad t\leq T,
\end{equation}
where $\bff_t$ is a $r$-dimensional vector of factors; $\bb_i$ is the loading vector and $u_{it}$ is the idiosyncratic noise. Specifically,  (\ref{eq2.1}) applies to two decompositions of this   model.

\textbf{Factor Decomposition.} The matrix form of the  factor model gives
$$
\bY=\bM+\bU,\quad \bM:=\bB\bF',
$$
where $\bY$ and $\bU$ are $N\times T$ matrices of $y_{it}$ and $u_{it}$; $\bB$ is the $N\times r$ matrix of $\bb_i$ while $\bF$ is the $T\times r$ matrix of $\bff_t.$ Then corresponding to the notation (\ref{eq2.1}), $\bSigma=\bY$, $\bL=\bM$ and $\bS=\bU.$ In this decomposition, $\bSigma$ is observable. Apparently, $\bM$ is a low-rank matrix with rank $r$. The nonzero singular values of $\bM$, under the strong factor assumption, grows much faster than those of $\bU$, which gives rise to the spikedness property. Now let $\bxi$ be the $N\times r$ matrix whose columns are the left singular vectors of $\bM$,  and let $\bxi_i'$   denote its $i$ th row.   Then   under the assumption that the nonzero eigenvalues of $\bB'\bB$ grow fast with $N$,   for some constant $C>0$,
\begin{equation}\label{eq0.2.2}\max_{i\leq N}\|\sqrt{N}\bxi_i\|\leq C\max_{i\leq N}\|\bb_i\|,
\end{equation}
 which gives rise to the incoherent singular vectors.  The right singular vectors can be bounded similarly.  



\textbf{Covariance Decomposition.} It is also well known from the factor model (\ref{eqjf06}) that the covariance matrix of $\by_t=(y_{1t},\cdots,y_{Nt})'$, denoted by $\bSigma_y$, can be decomposed as follows:
 \begin{equation}\label{eq2.2}
\bSigma_y=\bL+\bSigma_u,\quad \bL:=\bB\cov(\bff_t)\bB',
\end{equation}
where $\bSigma_u$ denotes the covariance matrix of $\bu_t$.
The above decomposition is well known for portfolio allocations and risk managements, where the total volatility is decomposed into the systematic risk $\bL$, plus the (sparse) idiosyncratic risk $\bSigma_u$.  It also leads to the spiked incoherent low-rank model, but $\bSigma_y$ is unknown and needs to be estimated.    

\subsection{Estimation}\label{sec2.2}

There are two general approaches to estimating  model (\ref{eq2.1}):  (i) Principal Components Analysis (PCA), and (ii) low-rank regularization.  Here we present a general PCA estimation setting, and defer the discussion of  low-rank regularization to Section \ref{sec:3.2}. We shall assume rank$(\bL)=r$ to be known.

For any matrix $\bA$, let $\bA=\bU_{A}\bD_{A}\bV'_{A}$ denote the singular value decomposition (SVD) of $\bA$. Define the \textit{singular value hard thresholding} operator as
\begin{equation} \label{eqjf05}
H_R(\bA):= \bU_A\bar \bD_R\bV_A'
\end{equation}
where $\bar \bD_R$ is a diagonal matrix that keeps  the top $R$ diagonal elements of $\bD_A$ and replaces the remaining  elements by zeros. So $H_R(\bA)$ is the best rank $R$ matrix approximation to $\bA$.

Suppose  an estimator of   $\bSigma$, denoted by $\widehat\bSigma$, is available, satisfying
\begin{equation}\label{eq2.3fdaf}
\|\widehat\bSigma-\bSigma\|=O_P(\eta_N),\quad \|\widehat\bSigma-\bSigma\|_\infty=O_P(c_N)
\end{equation}
for some sequences $\eta_N$ and $ c_N$.  We use $\widehat\bSigma$ as the input matrix, which can be the sample covariance matrix or its robustfied versions \citep{fan2019robust}.
The goal is to estimate $\bL$ in (\ref{eq2.1}) and  its $N\times r$ matrix of the left singular vectors, denoted by $\bxi$ (also let $\bzeta$ denote its right singular vectors).
We  use respectively
$
\widehat\bL:=H_R(\widehat\bSigma)
$ with $R=r$, which is the rank $r$ projection of $\widehat{\bSigma}$, and
the $N\times r$  matrix $\widehat\bxi$ whose columns are the left singular vectors of $\widehat\bSigma$.   The following theorem, adapted from \cite{fan2018eigenvector}, provides deviation bounds of the estimators.
To make the paper self-contained, we also provide a simpler proof  with slightly different conditions. 

\begin{thm}\label{th1.1} Consider the general model (\ref{eq2.1}) with bounded $r:=\text{rank}(\bL)$.  Suppose that  $ \min_{2\leq i\leq r+1}| \lambda_{i-1}(\bL)-\lambda_i(\bL)|\asymp\max_{2\leq i\leq r+1}| \lambda_{i-1}(\bL)-\lambda_i(\bL)| :=g_N$ and   $\eta_N+\|\bS\|=o_P(g_N)$.
Then, under condition (\ref{eq2.3fdaf}), we have (i)
$$
 \|\widehat\bL-\bL\|=  O_P\left( \eta_N+\|\bS\|\right),\quad \|\widehat\bxi-\bxi\| =O_P\left(\frac{ \eta_N+\|\bS\|}{  g_N}\right).
$$
 (ii)  If  additionally,     $\|\bS\|_\infty+\|\bL\|_\infty = O_P(1) $, $N_1c_N=o_P(g_N)$, then
\begin{eqnarray*}
& & \|\widehat\bxi-\bxi\|_\infty
\leq O_P  \left(   \frac{N_1}{\sqrt{N}}+  \sqrt{N_1}   \right)    (\eta_N+\|\bS\|)g_N^{-2}\\
& & \qquad
+O_P \left(c_N \frac{ N_1}{\sqrt{N}}   +  c_N  \sqrt{N_1}   +\|\bS\bzeta_d \|_\infty\vee\|\bS'\bxi_d \|_\infty\right) g_N^{-1}.
\end{eqnarray*}
\end{thm}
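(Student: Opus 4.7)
The plan splits along the two assertions: Part (i) reduces to two classical perturbation facts applied to $\widehat\bSigma = \bL + \bS + E$ with $E := \widehat\bSigma - \bSigma$, while Part (ii) requires a first-order expansion of $\widehat\bxi$ combined with coordinate-wise estimates that exploit the incoherence of $\bxi$ and $\bzeta$.

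For Part (i), the operator-norm bound on $\widehat\bL-\bL$ follows from the Eckart--Young property: $\widehat\bL=H_r(\widehat\bSigma)$ is the best rank-$r$ approximant of $\widehat\bSigma$, and since $\text{rank}(\bL)=r$, one has $\|\widehat\bL-\widehat\bSigma\|\le\|\bL-\widehat\bSigma\|$. The triangle inequality then gives $\|\widehat\bL-\bL\|\le 2\|\widehat\bSigma-\bL\|\le 2(\|\bS\|+\|E\|)=O_P(\|\bS\|+\eta_N)$. For the singular-vector bound, the hypothesis $\eta_N+\|\bS\|=o_P(g_N)$ combined with Weyl's inequality guarantees that the top $r$ singular values of $\widehat\bSigma$ remain separated from the remaining ones by $\asymp g_N$, so Wedin's $\sin\Theta$ theorem applied to the perturbation $\widehat\bSigma-\bL$ yields $\|\widehat\bxi-\bxi\|=O_P((\|\bS\|+\eta_N)/g_N)$, where any orthogonal alignment $\bH=\bxi'\widehat\bxi$ is absorbed into $\bxi$.

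For Part (ii), I start from the eigen-identity $\widehat\bSigma\widehat\bSigma'\widehat\bxi=\widehat\bxi\widehat\bD^2$ together with $\bL\bL'\bxi=\bxi\bD^2$. Setting $\bN:=\bS+E$ so that $\widehat\bSigma\widehat\bSigma'=\bL\bL'+\bL\bN'+\bN\bL'+\bN\bN'$, a rearrangement yields
\[
\widehat\bxi-\bxi\bH \;=\; \bigl(\bL\bN'+\bN\bL'+\bN\bN'\bigr)\widehat\bxi\,\widehat\bD^{-2} \;+\; \bxi\bigl(\bD^2\bH-\bH\widehat\bD^2\bigr)\widehat\bD^{-2}.
\]
Taking $\|\cdot\|_\infty$ of each summand, I peel off $\bL=\bxi\bD\bzeta'$ (so the incoherence prefactor $\|\bxi\|_\infty=O(N^{-1/2})$ and $\|\bzeta\|_\infty=O(N_1^{-1/2})$ appear), invoke $\|\widehat\bD^{-2}\|=O_P(g_N^{-2})$ and $\|\bD\|\asymp g_N$, and split $\widehat\bxi=\bxi\bH+(\widehat\bxi-\bxi\bH)$ inside the bracket. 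This exposes leading pieces whose $\|\cdot\|_\infty$ matches exactly the quantities $\|\bS\bzeta_d\|_\infty$ and $\|\bS'\bxi_d\|_\infty$ in the statement, and residual pieces controlled by the operator-norm bound on $\widehat\bxi-\bxi\bH$ from Part (i).

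The main obstacle is the quadratic noise term $\bN\bN'\widehat\bxi$, which decomposes as $\bS\bS'+\bS E'+E\bS'+EE'$. For each piece I must combine element-wise information ($\|\bS\|_\infty=O(1)$, $\|E\|_\infty=O_P(c_N)$) with operator-norm information ($\|\bS\|=o_P(g_N)$, $\|E\|=O_P(\eta_N)$), applying $\|A\bv\|_\infty\le\sqrt{N_1}\,\|A\|_\infty\|\bv\|_2$ and its row-norm analogs together with $\|\bxi M\|_\infty\lesssim\|\bxi\|_\infty\cdot r\cdot\|M\|$ to propagate incoherence through products. Tracking the dimension-dependent prefactors so that they assemble into exactly $(\sqrt{N_1}+N_1/\sqrt{N})(\eta_N+\|\bS\|)g_N^{-2}$ on the one hand and $(c_N\sqrt{N_1}+c_NN_1/\sqrt{N}+\|\bS\bzeta_d\|_\infty\vee\|\bS'\bxi_d\|_\infty)g_N^{-1}$ on the other is the most delicate bookkeeping step; using $N_1 c_N=o_P(g_N)$ at the end keeps the quadratic cross-terms of lower order than the leading linear contributions.
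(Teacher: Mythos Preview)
Your argument for Part (i) is correct and in fact more streamlined than the paper's: where you invoke Eckart--Young and Wedin directly on $\widehat\bSigma$ versus $\bL$, the paper instead passes to the symmetric matrices $\widehat\bSigma\widehat\bSigma'$ and $\bL\bL'$, bounds $\|\widehat\bSigma\widehat\bSigma'-\bL\bL'\|$ by $a_N:=2\eta_N\|\bL\|+\eta_N^2+3\|\bL\|\|\bS\|$, and then applies Weyl and Davis--Kahan to the squared singular values. Both routes reach the same conclusion.

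For Part (ii), however, the paper takes a substantially different and cleaner path that sidesteps your ``main obstacle'' entirely. Rather than the eigen-identity for $\widehat\bSigma\widehat\bSigma'$, the paper uses the SVD linking relations $\widehat\bxi_d=\lambda_d^{-1}(\widehat\bSigma)\,\widehat\bSigma\,\widehat\bzeta_d$ and $\bxi_d=\lambda_d^{-1}(\bL)\,\bL\,\bzeta_d$, and decomposes the difference $\widehat\bxi_d-\bxi_d$ into three pieces each \emph{linear} in $\bN=\bS+E$. This yields a bound of the form
\[
\|\widehat\bxi-\bxi\|_\infty \;\le\; O_P\!\bigl(N_1 c_N g_N^{-1}\bigr)\,\|\widehat\bzeta-\bzeta\|_\infty \;+\; O_P(b_N),
\]
together with the symmetric inequality obtained by exchanging left and right singular vectors. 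Since $N_1 c_N g_N^{-1}=o_P(1)$ by hypothesis, summing the two and solving the resulting contraction gives $\|\widehat\bxi-\bxi\|_\infty+\|\widehat\bzeta-\bzeta\|_\infty=O_P(b_N)$ immediately. No quadratic term $\bN\bN'$ ever appears, so the delicate bookkeeping you anticipate is bypassed, and the role of the assumption $N_1 c_N=o_P(g_N)$ becomes transparent: it is exactly the contraction factor in the coupled system. Your squared-matrix approach can likely be completed, but the quadratic cross terms make it less economical, and from your sketch it is not clear that the prefactors assemble into precisely the form stated in the theorem rather than a comparable but slightly different expression.
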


\begin{proof}
See the online supplement.
\end{proof}

This theorem is relatively general, and is applicable to low-rank models that
are not necessarily consequences from factor models. The proof relies on perturbation bounds for singular vectors/values, and the achieved rates are sharp.  Result (i) is simple and gives asymptotic bounds under the operator norm.
Result (ii) gives element-wise deviation bound for the singular vectors, which requires more dedicated technical arguments.    



\section{Estimation under Factor Models}

We  observe an  $N\times T$ data matrix $\bY$, which can be decomposed as
\[\bY=\bM+\bU=\bB\bF'+\bU\]
where $\bB$ is $N\times r$ factor loadings matrix, $\bF$ is $T\times r$ factors matrix and $\bU$ is $N\times T$ idiosyncratic errors, which are uncorrelated with $\bM:=\bB \bF'$. All the three parts $\bB$, $\bF$ and $\bU$ are unobserved.    The $t$ th column of this expression can be written as
\begin{equation} \label{eq3.1}
  \by_t=\bB\bff_t+\bu_t. 
\end{equation}

\subsection{PCA and MLE}\label{sec3.1}
\subsubsection{PCA}
Under the model's specification, we have the covariance structure (\ref{eq2.2}).
One of the most widely used estimation methods for the factor model is \textit{principal components analysis} (PCA).  Define the sample covariance
$\bS_y=\frac1T\sum_{t=1}^T \by_t\by_t'=\frac1T\bY\bY'$.
Let $\widehat\bxi_j$  be the $j$th eigenvector corresponding to the largest $j$ th eigenvalues of $\bS_y$.  The PCA  estimates $\bB$ by taking $\widehat\bB=\sqrt{N}(\widehat\bxi_1,\cdots,\widehat\bxi_R)$, which estimates $\bB$ up to a diagonal transformation.
   Given $\widehat\bB$, the factors can be estimated via the least squares:
   $$
   \widehat\bF=\bY'\widehat\bB (\widehat\bB'\widehat\bB)^{-1} =\frac{1}{N}\bY'\widehat\bB.
   $$
   This also leads to the estimated low-rank component
$\frac{1}{T}\widehat\bB\widehat\bF'\widehat\bF\widehat\bB'$ for $\bB\cov(\bff_t)\bB'  $.

PCA  is equivalent to the singular value hard thresholding by taking the input matrix $\widehat\bSigma=\bS_y$. Then
$\frac{1}{T}\widehat\bB\widehat\bF'\widehat\bF\widehat\bB'=H_R(\bS_y)$.
One can then apply Theorem \ref{th1.1} to infer the rates of convergence of the PCA estimators, which were obtained by \cite{SW02}.
\cite{bai03} proved the asymptotic normality of PCA estimators for the factors and loadings.
Results with general input $\hat{\bSigma}$ can be found in Chapter 10 of \cite{fan2020statistical}.

 \subsubsection{Maximum Likelihood Estimations}
Another popular method to estimate a factor model is the maximum likelihood (ML) method (see, e.g., \cite{Lawley}, \cite{bai2012statistical}, \cite{doz}). 
Under the independence and normality assumptions, the log-likelihood function based on $\by_t$ is, for some constant $C$,
\[
    \log L_{\mathrm{ML}}(\bB, \cov(\bff_t),  \diag(\bSigma_u))=C-\frac T2\ln|\bSigma_y|-\frac12\sum_{t=1}^T\by_t'\bSigma_y^{-1}
    \by_t.
\]
The log-likelihood function is then maximized with respect to the matrix parameters $(\bB, \cov(\bff_t),  \diag(\bSigma_u))$ under additional restrictions that $\bSigma_u$ is diagonal \citep{bai2012statistical, bai2016maximum} or sparse with regularizations \citep{bai2016efficient, wang2019penalized}. Recently \cite{barigozzi2019quasi} explicitly accounted for  autocorrelations of the factors   in the likelihood function.


The factors can be estimated by two methods, one of which is the projection method. Under the joint normality assumptions of $\bff_t$ and $\bu_t$, we have
\[\mathbb E(\bff_t|\by_t)=\bB'(\bB\bB'+\bSigma_u)^{-1}\by_t=(\bI_r+\bB'\bSigma_u^{-1}\bB)^{-1}\bB'\bSigma_u^{-1}\by_t.\]
This provides the basis of estimating factors. The other approach is the generalized least squares: for given $\bB$ and $\bSigma_u^{-1}$, 
the GLS estimator for $\bff_t$ is
\[\widehat\bff_t= (\bB'\bSigma_u^{-1}\bB)^{-1}\bB'\bSigma_u^{-1}\by_t.\]
Replacing the unknown parameters with their ML estimators, one obtains  two estimators for the latent factors. Under large-$N$ setup, the difference of the two methods (PCA and MLE) for estimating factors are asymptotically negligible.

\subsection{Low rank estimation}\label{sec:3.2}

 Alternative to PCA, one can estimate $\bM$ directly taking advantage of its low-rank structure, based on the \textit{nuclear-norm regularization}, the $\ell_1$-norm of singular values, that encourages the sparseness in singular values and hence low-rankness.
  For an $n\times m$ matrix $\bA$, let $\|\bA\|_n:=\sum_{i=1}^{\min\{m,n\}}\psi_i(\bA)$ be its nuclear-norm, where $\psi_i(\bA)$ is the $i$ th largest singular value of $\bA$.

\subsubsection{Singular value thresholding}

 Given the low-rank structure of $\bM$ (sparsity in singular value of $\bM$), we can estimate the model via solving the following penalized optimization:
\begin{equation}\label{soft-2.1}
\widehat \bM= \arg\min_{ M} \frac{1}{2} \|\bY- \bM\|_F^2+\nu\|\bM\|_n
\end{equation}
for some tuning parameter $\nu>0$. The solution is  $\widehat \bM= S_{\nu}(\bY)$, where $S_{\nu}(\cdot)$ is the singular value thresholding operator \citep{ma2011fixed}, defined as follows.   Let $\bY=\bU_y\bD\bV_y'$ be its  SVD. Then  $S_{\nu}(\bY):= \bU_y\bD_{\nu}\bV_y',$ where $\bD_{\nu}=
\diag(\{D_{ii}-\nu\}_+)$  with $D_{ii}$ being the  diagonal entries of $\bD$.  So $S_{\nu}(\bY)$ applies ``soft-thresholding" on the singular values of $\bY$. One   can additionally estimate the factors and loadings using the singular vectors.


We note that this method is closely related to the PC-estimator, except the soft-thresholding is replaced by hard-threshoding. Let $R$ denote the ``working number of factors", which is the number of principal components one takes when applying the PC-method. 
We note that   the PC-estimator for $\bM$ with $R$ factors is given by (see Section \ref{sec2.2}):
$$
\widehat \bM_{\text{PC}} = H_R(\bY),\quad H_R(\bY):=\bU_y\bar \bD_{R} \bV_y'.
$$
This estimator is the solution to the penalized least squares problem (\ref{soft-2.1}) except that the nuclear norm is replaced by $\sum_{i=1}^{\min\{N,T\}} p_\nu(\psi_i(\bM))$, where $p_\nu(\theta) = \nu^2 - (\nu - |\theta|)_+^2$ is the harding thresholding penalty  and
$\psi_i(\bM)$ is the $i^{th}$ singular value of $\bM$.

Therefore the difference between (\ref{soft-2.1}) and PCA is more fundamentally about that of hard- and soft- thresholding.
Despite of many good properties,    the soft-thresholding estimator  possesses  shrinkage bias, while the hard-thresholding reduces the bias.  
As a matter of fact,   the shrinkage bias is  on the singular values, rather than on the singular vectors. Indeed, the singular vectors of the two estimators are the same, and equal to the top $R$ singular vectors of $\bY.$  An important implication is that the factor estimator building on  $\widehat\bM$  is numerically equivalent to the PC-estimators for the factors, which does not suffer from any shrinkage bias.  A formal statement and proof of the unbiasedness of eigenvectors can be found in
\cite{fan2019distributed}.

\subsubsection{Low-rank plus sparse decomposition}

Recall that  $\bSigma_y$ and $\bSigma_u$ denote the $N\times N$ covariance matrices of $\by_t$ and $\bu_t$ in model (\ref{eq3.1}), and that we have the following decomposition
\begin{equation}
\bSigma_y=\bL+\bSigma_u,\quad \bL:=\bB\cov(\bff_t)\bB'.
\end{equation}
We now demonstrate  that this decomposition also provides a nice structure for estimating the covariance components.  A key assumption is \textit{conditionally sparsity}, namely, $\bSigma_u$ is sparse. While the definition of sparsity may differ in different contexts, here  we  mean
$$
J:=\sum_{i\neq j} 1\{\E u_{it} u_{jt}\}
$$
 should not grow too fast as $N\to\infty.$ This requirement can be weakened to \textit{approximate sparsity}. In addition, $\bL$ is a low-rank matrix. Thus we can directly estimate the above covariance decomposition via solving the following penalized optimization:
 \begin{equation}\label{eq3.3add}
 (\widehat\bL, \widehat\bSigma_u):=\arg\min_{\bL,\bSigma_u} \frac{1}{2} \|\bS_y- (\bL+\bSigma_u)\|_F^2+\nu_1\|\bL\|_n+\nu_2\|\bSigma_u\|_1,
 \end{equation}
where $\nu_1$ and $\nu_2$ are tuning parameters. Note that here we use the notation $\|\bA\|_1=\sum_{i,j}|A_{ij}|$ as the matrix 1-norm, distinguished from the usual matrix $\ell_1$-norm $\|\bA\|_{\ell_1}:=\max_{i\leq N}\sum_{j=1}^N|A_{ij}|$.  The above optimization has been employed by many authors to study the \textit{low rank plus sparse decomposition}, while some authors exclude the diagonal elements of $\bSigma_u$ from the penalization, and additionally impose positive-definite and other constraints on $\bL$ and $\bSigma_u$ \citep{klopp2017robust, agarwal2012noisy}. Finally, given $\widehat\bL$, we can   estimate the factors and loadings by extracting its  eigenvectors.

The above optimization can be solved by alternating the estimation of $\bL$ and $\bSigma_u$, and closed form solutions are available in both iterations.
Given $\bSigma_u$, solving for $\bL$ leads to the \textit{singular value soft-thresholding}:  $\widehat{\bL} = S_{\nu_1}(\bS_y- \bSigma_u)$, and
given $\bL$, solving for $\bSigma_u$ leads to the \textit{element-wise soft-thresholding}: $\widehat{\bSigma}_u =  \widetilde S_{\nu_2}(\bS_y-\bL)$.
While both  iterations solve convex problems, standard convergence analysis can be applied to show that the iterative algorithm converges in polynomial time.

\cite{agarwal2012noisy} and \cite{klopp2017robust} studied the statistical convergence properties of (\ref{eq3.3add}). Let columns of $\bU_{L,2}$ be  the singular vectors of the true $\bL$ corresponding to the \textit{zero}  singular values. Define projections $\mathcal P(\bA):=\bU_{L,2}\bU_{L,2}'\bA\bU_{L,2}\bU_{L,2}'$ and $\mathcal M(\bA):=\bA-\mathcal P(\bA)$. In addition, let $(\bA)_J$ and $(\bA)_{J^c}$ be the submatrices of $\bA$, whose elements respectively correspond to $\E u_{it}u_{jt}\neq0$ and  $\E u_{it}u_{jt}=0.$  Additionally define
$$
\mathcal C(\nu_1, \nu_2):=\{(\bA_1,\bA_2):
\nu_1\|\mathcal P(\bA_1)\|_n+\nu_2\|(\bA_2)_{J^c}\|_1\leq 3\nu_1\|\mathcal M(\bA_1)\|_n+3\nu_2\|(\bA_2)_{J}\|_1
\}.
$$
A key quantity is the \textit{restricted strong convexity} (RSC) constant, which is defined as follows:
$$
\kappa(\nu_1,\nu_2):=\sup\{c>0:\|\bA_1+\bA_2\|_F^2\geq c\|\bA_1\|_F^2+c\|\bA_2\|_F^2 \text{ for all } (\bA_1, \bA_2)\in \mathcal C(\nu_1, \nu_2) \}.
$$

We then have the following theorem, adapted from \cite{agarwal2012noisy}. To make the paper self-contained, we also provide a   proof  with slightly different conditions. See the online supplement.

\begin{thm}\label{th2.1}
Conditioning on  events
 $4\|\bS_y-\bSigma_y\|\leq \nu_1$ and $ 4\|\bS_y-\bSigma_y\|_\infty\leq \nu_2$, there is $C>0$ that only depends on $\text{rank}(\bL)$, so that
 $$
 \frac{1}{N^2}\|\widehat\bL-\bL\|_F^2+ \frac{1}{N^2}\|\widehat\bSigma_u-\bSigma_u\|_F^2\leq \frac{C}{\kappa^2(\nu_1,\nu_2)}\frac{(\nu_1^2+(J+N)\nu_2^2)}{N^2}.
 $$
\end{thm}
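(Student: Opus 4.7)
The plan is to follow the standard analysis of regularized $M$-estimators with decomposable penalties, adapted to the additive penalty $(\nu_1\|\cdot\|_n,\nu_2\|\cdot\|_1)$. Write $\bDelta_1:=\widehat\bL-\bL$, $\bDelta_2:=\widehat\bSigma_u-\bSigma_u$, and $\bE:=\bS_y-\bSigma_y$ (the effective ``noise''). Since $\bSigma_y=\bL+\bSigma_u$ and $(\widehat\bL,\widehat\bSigma_u)$ minimizes the penalized objective, plugging the truth into the objective and expanding the Frobenius term yields the \emph{basic inequality}
$$\frac{1}{2}\|\bDelta_1+\bDelta_2\|_F^2 \leq \langle \bE,\bDelta_1+\bDelta_2\rangle + \nu_1(\|\bL\|_n-\|\widehat\bL\|_n) + \nu_2(\|\bSigma_u\|_1-\|\widehat\bSigma_u\|_1).$$

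First I would bound the cross term by its dual norms: $|\langle \bE,\bDelta_1\rangle|\leq\|\bE\|\,\|\bDelta_1\|_n$ and $|\langle \bE,\bDelta_2\rangle|\leq\|\bE\|_\infty\|\bDelta_2\|_1$, which on the hypothesized events are at most $(\nu_1/4)\|\bDelta_1\|_n$ and $(\nu_2/4)\|\bDelta_2\|_1$. Next I invoke nuclear-norm decomposability at the rank-$r$ matrix $\bL$, giving $\|\widehat\bL\|_n \geq \|\bL\|_n + \|\mathcal P(\bDelta_1)\|_n - \|\mathcal M(\bDelta_1)\|_n$, and the analogous entrywise decomposition $\|\widehat\bSigma_u\|_1 \geq \|\bSigma_u\|_1 + \|(\bDelta_2)_{J^c}\|_1 - \|(\bDelta_2)_J\|_1$, with the always-nonzero diagonal absorbed into the effective support. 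Splitting $\|\bDelta_1\|_n \leq \|\mathcal P(\bDelta_1)\|_n+\|\mathcal M(\bDelta_1)\|_n$ and similarly for $\|\bDelta_2\|_1$, then using nonnegativity of $\tfrac12\|\bDelta_1+\bDelta_2\|_F^2$, one extracts the cone condition
$$\nu_1\|\mathcal P(\bDelta_1)\|_n+\nu_2\|(\bDelta_2)_{J^c}\|_1 \leq \frac{5}{3}\bigl(\nu_1\|\mathcal M(\bDelta_1)\|_n+\nu_2\|(\bDelta_2)_J\|_1\bigr),$$
which places $(\bDelta_1,\bDelta_2)\in\mathcal C(\nu_1,\nu_2)$ since $5/3<3$.

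With cone membership in hand, I would activate RSC to obtain $\|\bDelta_1+\bDelta_2\|_F^2\geq \kappa(\nu_1,\nu_2)(\|\bDelta_1\|_F^2+\|\bDelta_2\|_F^2)$. Returning this to the basic inequality and using the structural bounds $\|\mathcal M(\bDelta_1)\|_n\leq\sqrt{2r}\,\|\bDelta_1\|_F$ (because $\mathrm{rank}(\mathcal M(\bDelta_1))\leq 2r$) together with $\|(\bDelta_2)_J\|_1\leq\sqrt{J+N}\,\|\bDelta_2\|_F$ (the effective sparsity of $\bSigma_u$ is $J+N$ after including the diagonal), I obtain
$$\frac{\kappa(\nu_1,\nu_2)}{2}\bigl(\|\bDelta_1\|_F^2+\|\bDelta_2\|_F^2\bigr) \leq C\bigl(\nu_1\sqrt{r}\,\|\bDelta_1\|_F+\nu_2\sqrt{J+N}\,\|\bDelta_2\|_F\bigr).$$
An AM-GM splitting that absorbs a fraction of each quadratic term back into the left-hand side delivers the claim $\|\bDelta_1\|_F^2+\|\bDelta_2\|_F^2\leq C\kappa^{-2}(\nu_1^2+(J+N)\nu_2^2)$, which after division by $N^2$ is the stated bound, with $r$ absorbed into $C$ since $\mathrm{rank}(\bL)$ is fixed.

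The main obstacle I anticipate is the careful bookkeeping of two decomposabilities operating in parallel—one for the nuclear norm via $(\mathcal P,\mathcal M)$ and one for the entrywise $\ell_1$ norm via $(J,J^c)$—and in particular the correct accounting of the diagonal entries of $\bSigma_u$, which are always nonzero and are precisely what forces the extra $+N$ term in the sparsity factor. Once the cone condition is verified with the matching constant and the two curvature bounds are assembled in parallel, the remaining arithmetic is routine.
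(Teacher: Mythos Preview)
Your proposal is correct and follows essentially the same route as the paper's proof: basic inequality from optimality, dual-norm bounds $|\langle\bE,\bDelta_1\rangle|\le\|\bE\|\,\|\bDelta_1\|_n$ and $|\langle\bE,\bDelta_2\rangle|\le\|\bE\|_\infty\|\bDelta_2\|_1$, parallel decomposability of the nuclear and $\ell_1$ penalties, cone membership, RSC, and the structural bounds $\|\mathcal M(\bDelta_1)\|_n\le\sqrt{2r}\,\|\bDelta_1\|_F$ and $\|(\bDelta_2)_J\|_1\le\sqrt{J+N}\,\|\bDelta_2\|_F$. The only differences are cosmetic constants (your cone constant $5/3$ versus the paper's $3$, and your AM--GM finish versus the paper's direct bound on $\|\bA_1\|_F+\|\bA_2\|_F$), and your explicit flagging of the diagonal's contribution to the $+N$ term, which the paper uses without comment.
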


\begin{proof}
See the online supplement.
\end{proof}

The optimal  tuning parameters can be set to satisfy $\nu_1\asymp \frac{N}{\sqrt{T}}$ and $\nu_2\asymp \sqrt{\frac{\log N}{T}}$, respectively, accounting for estimating errors under two matrix norms:
 $$
 \|\bS_y-\bSigma_y\|\leq \nu_1,\quad  \|\bS_y-\bSigma_y\|_\infty\leq \nu_2;
$$
both can be shown to hold with high probability under weak serial dependence and sub-Gaussian conditions. In additionally, if $\kappa(\nu_1,\nu_2)$ is bounded away from zero, with the choice of tunings,  the convergence rate in Theorem \ref{th2.1} is $O_P(1+\frac{J\log N}{N^2})\frac{1}{T}$, which is sufficient to guarantee the convergence of the estimated factors and loadings. We refer to  Lemma 2 of \cite{agarwal2012noisy}  for more refined lower bound of $\kappa(\nu_1,\nu_2)$.

The above problem is also called ``robust PCA'' \citep{candes2011robust}.  For recent advance and references, see \cite{chen2020bridging} where factorization methods are also discussed. 


\subsection{Covariance estimation}\label{sec:poet}

 \cite{POET} proposed a nonparametric estimator of $\bSigma_y$, named POET (Principal Orthogonal complEment Thresholding), when the factors are unobservable. It is basically an one-step solution to optimization (\ref{eq3.3add}) with initialization $\bSigma_u=0$.  To motivate the estimator, suppose  $r=R$. Then, heuristically
 $$
\bL\approx H_R(\bSigma_y),\quad \bSigma_u\approx  \bSigma_y -H_R(\bSigma_y),
 $$
 Thus, one estimates $\bL$ by $H_R(\bS_y)$ and sets $\bS_u:=\bS_y-H_R(\bS_y)$.
 To account for the sparsity assumption on $\bSigma_u$,  \cite{POET}  estimates $\bSigma_y$ and $\bSigma_u$ as
 \begin{equation}
 \widehat\bSigma_y=H_R(\bS_y)+\widehat\bSigma_u,\quad \widehat\bSigma_u= (h(S_{u,ij},\lambda_{ij}))_{N\times N},
 \end{equation}
 where  $h(x,\lambda_{ij})$  denotes the element-wise thresholding operator with thresholding value $\lambda_{ij}$. Here, we emphasize element-dependent thresholding $\lambda_{ij}$ to adapt to varying scales of covariance.  For correlation thresholding at level $\lambda$, we take $\lambda_{ij} = \lambda \sqrt{s_{u, ii} s_{u, jj}}$ with $s_{u, ii}$  a diagnonal element of $\bS_u$\citep{POET}; we can also take other form such as the adaptive thresholding in \cite{Cai11b}.  In general, the thresholding function should satisfy:
\\
(i) $h(x,\lambda)=0$ if $|x|<\lambda$,\\
(ii) $|h(x,\lambda)-x|\leq \lambda$.\\
(iii) there are constants $a>0$ and $b>1$ such that $|h(x,\lambda)-x|\leq a\lambda^2$ if $|x|>b\lambda$.

Note that condition (iii) requires that the thresholding bias should be of higher order.  It  is not necessary for consistent estimations, but we recommend using   nearly unbiased thresholding  \citep{AF} for  inference applications. One such example is known as SCAD.
As noted  in  \cite{powerenhancement},   the unbiased thresholding is required  to avoid   size distortions in a large class of high-dimensional testing problems involving a ``plug-in" estimator of $\bSigma_u$.
In particular, this rules out the popular  {soft-thresholding} function,  which does not satisfy (iii) due to its first-order shrinkage bias.  




\subsection{Projected PCA}

In empirical asset pricing, factor loadings are known to  depend on   individual-specific observables $\bX_i$, which represent a set of time-invariant  characteristics such as individual stocks' size, momentum, and values.
To incorporate the information carried by the observed characteristics,   \cite{CL07} and \cite{CMO} model explicitly the loading matrix as a function of covariates $\bX_i$.  \cite{fan2016projected}  extended the model to allowing components in factor loadings that are not explainable by characteristics:
\begin{equation}\label{eq4}
\bb_i=\bg(\bX_i) +\bgamma_i,\quad \E(\bgamma_i|\bX_i)=0.
\end{equation}
Here $\bg(\cdot)$ is a vector of nonparametric functions. With this model, they introduced an improved factor estimator, known as \textit{projected PCA}.

The basic idea of projected PCA is to smooth the observations $\{y_{it}\}_{i=1}^N$ for each given  $t$ against their associated covariates $\{\bX_{i}\}_{i=1}^N$ (cross-sectional smoothing), and apply   PCA to the smoothed data (fitted values). Let $\{\phi_j(\bx)\}_{j=1}^J$ be a set of basis functions. This can be either unstructured, such as kernel machines, or structured such as a basis for additive models \citep{fan2020statistical}. Set $\phi(\bX_i)'=(\phi_1(\bX_{i}),\cdots.,\phi_J(\bX_{i}))$ and $\Phi(\bX)=(\phi(\bX_1),\cdots,\phi(\bX_N))'$, an $N\times J$ matrix. Then the projection matrix on  characteristics can be  taken as
$
\bP=\Phi(\bX)(\Phi(\bX)'\Phi(\bX))^{-1}\Phi(\bX)'.
$
 The projected data $\bP \bY$ is the fitted value of regressing $\bY$ on to the   basis functions.

We make the following key assumptions:

\begin{ass}
\label{a3}
\begin{description}

\item[(i) Relevance:] With probability approaching one, all the eigenvalues of $\frac{1}{N}({\bP}\bB)'{\bP}\bB$ are bounded away from both zero and infinity as $N\to\infty$.

\item[(ii) Orthogonality:] $\mathbb{E}(u_{it}|\bX_i)=0$ for all $i\leq N, t\leq T.$

\end{description}
\end{ass}

The above  conditions require that the strengths of the  loading matrix should remain strong after the projection.  Condition (ii) implies that if we apply $\bP$ to both sides of $\bY=\bB\bF'+\bU$, then
$$
\bP\bY\approx\bP\bB\bF' = \bG \bF'
$$
where $\bG=\bP\bB$ is the $N\times r$ matrix, which $\approx (\bg(\bX_i))_{N\times r}$ under additional assumption $ \E(\bgamma_i|\bX_i)=0$ for all $i\leq N$.  In other words, the noise $\bU$ is suppressed, while signals remain.  Hence, the scaled sample covariance  $(\bP\bY)'\bP\bY =
\bY'\bP\bY\approx\bF\bG'\bG\bF'.
$
 For identification purpose, let us assume   $\bXi:=\bG'\bG$ is a diagonal matrix and $\bF'\bF/T=\bI$. Then from
$$
\frac{1}{T} \bY'\bP\bY\bF\approx  \bF\bXi,
$$ we infer that the columns of $\bF$ are approximately  the eigenvectors of the $ \bY'\bP\bY$, scaled by a factor $\sqrt{T}$.  This motivates estimating factors by using the top $R$ eigenvectors of $ \bY'\bP\bY$.

 \cite{fan2016projected}  derived the rates of convergence of the projected PCA method.   A nice feature  is that the consistency of latent factors is achieved even when the sample size $T$ is finite so long as $N$ goes to infinity.  Intuitively, the  idiosyncratic noise is removed from cross-sectional projections, which does not require a long time series.

Similarly, in many applications, while we do not know the latent factors $\bff_t$, we do know that factors are related to some proxy variables $\bW_t$.  For example, the latent factors are unknown for equity markets, but they are related to Fama-French factors \citep{fama2015five}; latent factors for disaggregated macroeconomics time series are unknown, but they are related to aggregated ones \citep{mccracken2016fred}.
Switching the roles of rows and columns, longitudinal regression of each series $\{y_{it}\}_{t=1}^T$ on $\{\bW_t\}_{t=1}^T$ yields the projected data matrix, from which latent factors and loadings can be extracted similarly.  See \cite{fan2020augmented} for details on how latent factor learning is augmented by instruments $\{\bW_t\}_{t=1}^T$.

\subsection{Diversified projection}

In this section, we   continue denoting by $R$ as the number of factors we use, and by $r$ as the true number of factors.   \cite{fan2019learning} proposed a simpler factor estimator that does not rely on eigenvectors, by using cross-sectional \textit{diversified projections} (DP).   Let
$
\bW=(\bw_1,\cdots,\bw_R)
$
be a  given  exogenous (or deterministic) $N\times R$ matrix,    where each of its $R$ columns $\bw_k$   is an $N\times 1$ vector of ``diversified weights", whose definition is to be clear below. We  estimate $\bff_t$ by simply taking
$$
\widehat\bff_t=\frac{1}{N}\bW'\by_t.
$$ By substituting  $\by_t=\bB\bff_t+\bu_t$ into the definition,  immediately we have  \begin{equation}\label{eq1.2}
\widehat\bff_t=\bH\bff_t+\frac{1}{N}\bW'\bu_t,\quad \bH=\frac{1}{N}\bW'\bB.
\end{equation}
Thus $\widehat\bff_t$ (consistently) estimates $\bff_t$ up to an $R\times r$ affine transform $\bH$, with the  estimation error $\be_t:=\frac{1}{N}\bW'\bu_t$. The assumption that $\bW$ should be diversified ensures that as $N\to\infty$, $\be_t$ is ``diversified away" (converging to zero in probability). More specifically, we impose the following assumption.
\begin{ass}\label{ass2.1}  There is a  constant  $c>0$, so that  as $N\to\infty$,\\
	(i) The $R\times R$ matrix $\frac{1}{N}\bW'\bW$ satisfies $\lambda_{\min}(\frac{1}{N}\bW'\bW)>c.$\\
	(ii) $\bW$ is independent of $\{\bu_t: t\leq T\}$.\\
	(iii) Suppose $R\geq r$, $\rank(\bH)=r$ and
	$\psi^2_{\min}(\bH)\gg \frac{1}{N}  $, where $\psi_{\min}(\bH)$ denotes the minimum nonzero singular value of $\bH =\frac{1}{N}\bW'\bB$. 	
	\end{ass}

 Conditions (i) and (ii) define the ``diversified weights" $\bW$.
 When $(u_{1t},\cdots,u_{Nt})$ are cross-sectionally weakly dependent, they ensure that  $ \be_t$ is diversified away.  Condition (iii) of Assumption \ref{ass2.1} is  a  key condition, which requires that $\bW$  should not diversify away the factor components in the time series.    Several choices of $\bW$ can be recommended to satisfy this condition.  For instance, if  factor loadings satisfy (\ref{eq4}), then fix $R$  components of sieve basis functions: $(\phi_1(\cdot),\cdots,\phi_R(\cdot))$, we can define
      $$\bW:=  (w_{i,k})_{N\times R},\quad \text{ where } w_{i,k}=\phi_k(\bX_i).$$
Alternatively, we can also use   transformations of the initial observation $\bx_{t}$ for $t=0$, which was considered by \cite{juodis2020linear}.  If $\by_0$ is independent of $\{\bu_t: t\geq 1\}$,  we can apply
 $
 w_{i,k} = \phi_k(y_{i,0})$ .
  These weights are correlated with $\bB$ through $\by_0=\bB\bff_0+\bu_0$.


An important benefit of the  DP is that it is robust to over-estimating the number of factors.
Theoretical  studies of factor models have  been crucially depending on the assumption that the number of factors, $r$, should be consistently estimated. This usually requires  strong conditions on the strength of factors and serial conditions. 
Recently,   \cite{barigozzi2018consistent}  proposed a PCA-based method to estimate factors that are robust to over-estimated $r$.  They provided rates of convergence of the estimated common components when $R\geq r$.


   \cite{fan2019learning} applied DP to several  inference problems in   factor-augmented models, including the  post-selection inference,  high-dimensional covariance estimation, and factor specification tests. They formally justified the  robustness  to over-estimating the number of factors in these applications. In particular,  DP admits   $r=0$ but  $R\geq 1$ as a special case. That is, the inference is still valid  even if there are  no common factors present,  but   factors are nevertheless estimated for insurance. In addition, \cite{karabiyik2019cce} applied DP to the context of panel data models in the presence of common factors.

\subsection{Factor estimators robust to heavy tails}\label{sec: 3.6}

To apply  either the PCA or the MLE to estimate the model, we need an initial covariance estimator $\bS_y$,  whose application requires elements of $\by_t$ have  sufficient moments. Some technical results  of factor estimations even require sub-Gaussian conditions on data's tail distributions.  However, heavy tailed data are not uncommon in economic applications. For instance,  about thirty percent of 131 disaggregated macroeconomic variables of
\cite{ludvigson2010factor}  have excess     kurtosis greater than six, so  their distributions are fatter than the t-distribution with degrees of freedom five.
Indeed, heavy tails are a stylized feature of high-dimensional data, as it is unlikely that all variables have sub-Gaussian tails.

Because the  presence of heavy-tailed data invalidates many conditions required for estimating  factor models, the recent literature has proposed several   methods that are robust to the tail distributions. Here we describe two of them: truncation and robust M-estimation.


 


In the high-dimensional setting,    consider estimating  multivariate   means from  an independent triangular array   variables $y_{i1},...,y_{iT}$ with $\max_{i\leq N}\Var(y_{it})\leq\sigma^2$. Truncate the data
$$
\widetilde y_{it}:=\sgn(y_{it})\min\{|y_{it}|, \tau_i\}
$$
with predetermined $\tau_i>0$. We then estimate
$\mathbb Ey_{it}$ using the truncated-mean
$\widetilde y_i:=\frac{1}{T}\sum_{t=1}^T\widetilde y_{it}$. Theorem \ref{th2.2} shows that the  high-dimensional means can be estimated  uniformly   well if $\mathbb E|y_{it}|^q<M$   for some $q\geq 2$.

\cite{catoni2012challenging} constructed a robust M-estimator that shares the same Gaussian concentration.  \cite{fan2017estimation, fan2019robust} used the adaptive Huber's loss to define the mean estimator:  
 $$
 \widehat y_i=\arg\min_{\mu} \sum_{t=1}^T\psi_{\tau_i}(y_{it}-\mu)
 $$
 where $\tau_i$ is a growing sequence, and $$
 \psi_{\tau}(z)=\begin{cases} z^2\tau^{-2}, & |z|<\tau\\
 2|z|\tau^{-1}-1, & |z|\geq \tau.
 \end{cases}
 $$
 The following theorem shows that $ \widehat y_i$ also estimates $\mathbb E y_{it}$ well provided that $\max_i\mathbb E y_{it}^2$ is bounded.

 \begin{thm}\label{th2.2}  Suppose $y_{it}$ is i.i.d. across $t$,    and $\max_{i\leq N}\mathbb E y_{it}^2<\sigma^2$.

 	(i) The truncation approach:
Suppose    $\max_{i\leq N}\mathbb E|y_{it}|^q<M$   for some $q\geq 2$.  In addition, suppose $\log N\leq CT$ for some $C>0$, and the   truncation parameter is set to satisfy
$\tau_i\asymp
\left(\frac{T}{\log N}\right)^{1/(1+q/2)}(\sigma^2\max_{i\leq N}\mathbb E|y_{it}|^q)^{1/(2+q)}.
$
Then there is $c>0$ which does not depend on any moments of $y_{it}$, or $(N,T)$, with probability at least $1-2N^{-3}$,
$$
\max_{i\leq N}\left|\widetilde y_i- \mathbb E y_{it}\right|
\leq (c M^{1/(2+q)}+3)\sigma\sqrt{\frac{\log N}{T}}.
$$

(ii) The robust M-estimation approach:  Suppose $\log N=o(T)$, and the   truncation parameter is set to satisfy
$\tau_i\asymp
\sqrt{\frac{T}{\log N} }.
$
Then there is $c>0$ which does not depend on any moments of $y_{it}$, or $(N,T)$, with probability at least $1-4N^{-3}$,
$$
\max_{i\leq N}\left|\widehat y_i- \mathbb E y_{it}\right|
\leq C(\sigma+1) \sqrt{\frac{\log N}{T}}.
$$

 	\end{thm}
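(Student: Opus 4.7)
The plan is to prove the two parts separately. Both follow a bias--variance decomposition coupled with Bernstein's inequality, differing only in how the heavy tails are tamed before concentration is applied.

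For part (i) (truncation), I will decompose
\[\widetilde y_i-\mathbb E y_{it}=\frac1T\sum_{t=1}^T\bigl(\widetilde y_{it}-\mathbb E\widetilde y_{it}\bigr)+\bigl(\mathbb E\widetilde y_{it}-\mathbb E y_{it}\bigr).\]
The truncation bias is controlled directly by the moment assumption: $|\mathbb E\widetilde y_{it}-\mathbb E y_{it}|\le\mathbb E|y_{it}|\mathbf 1\{|y_{it}|>\tau_i\}\le M\tau_i^{-(q-1)}$. The stochastic term is a centered sum of independent variables bounded by $\tau_i$ with variance $\le\sigma^2$, so Bernstein's inequality gives a tail probability of order $\exp(-cT\delta^2/(\sigma^2+\tau_i\delta))$ at level $\delta$. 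Setting $\delta\asymp\sigma\sqrt{\log N/T}$ and checking that the prescribed $\tau_i$ together with $\log N\lesssim T$ makes $\tau_i\delta\lesssim\sigma^2$ keeps Bernstein in its sub-Gaussian regime. The choice $\tau_i\asymp(T/\log N)^{1/(1+q/2)}(\sigma^2M)^{1/(2+q)}$ is exactly the one that balances the bias $M\tau_i^{-(q-1)}$ against $\sigma\sqrt{\log N/T}$ up to the factor $M^{1/(2+q)}$, producing the stated rate. A union bound over $i\le N$ at confidence level $N^{-3}$ completes the argument.

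For part (ii) (Huber), I will exploit convexity of the objective. Note that $\psi'_{\tau_i}(z)=(2/\tau_i^2)\,\mathrm{clip}_{\tau_i}(z)$, where $\mathrm{clip}_{\tau_i}(z):=\min\{\max\{z,-\tau_i\},\tau_i\}$, is the (scaled) Huber score and is non-decreasing in $z$. Therefore the empirical score $\mu\mapsto\sum_t\psi'_{\tau_i}(y_{it}-\mu)$ is non-increasing, and with $\mu^*:=\mathbb E y_{it}$ one obtains the key equivalence
\[\widehat y_i\le\mu^*+\delta\iff \mathcal S(\delta):=\frac1T\sum_{t=1}^T\mathrm{clip}_{\tau_i}(y_{it}-\mu^*-\delta)\le 0.\]
It therefore suffices to show $\mathcal S(\delta)<0$ with high probability for $\delta$ slightly larger than $(\sigma+1)\sqrt{\log N/T}$. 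The mean satisfies $\mathbb E\mathcal S(\delta)\le-\delta+(\sigma^2+\delta^2)/\tau_i$ by a Markov-type bound on the clipping error that uses only $\mathbb E y_{it}^2\le\sigma^2$, while Bernstein applied to the centered summands (bounded by $\tau_i$ with variance $\le\sigma^2+\delta^2$) gives fluctuations of order $\sigma\sqrt{\log N/T}+\tau_i\log N/T$. With $\tau_i\asymp\sqrt{T/\log N}$, both the bias and the fluctuation combine to $O((\sigma+1)\sqrt{\log N/T})$, so choosing $\delta$ of that order drives $\mathcal S(\delta)<0$ with probability at least $1-2N^{-3}$. The symmetric lower bound applied to $\mathcal S(-\delta)\ge 0$, together with a union bound over $i\le N$, then closes the argument.

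The hardest part will be the Bernstein regime split in part (ii): because only the second moment is available, $\tau_i\asymp\sqrt{T/\log N}$ is comparable to $\sigma$ when $\sigma$ is small, so Bernstein transitions from its sub-Gaussian (variance-dominated) regime to its sub-exponential (boundedness-dominated) regime, and both must be tracked to recover the $(\sigma+1)$ prefactor rather than $\sigma$ alone. In part (i) the analogous issue does not arise, because the stronger moment $\mathbb E|y_{it}|^q\le M$ lets $\tau_i$ be chosen much smaller than $\sqrt{T/\log N}$, keeping Bernstein sub-Gaussian throughout.
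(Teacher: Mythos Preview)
Your part (i) is essentially the paper's argument: the paper also splits into truncation bias plus a Bernstein concentration term, then optimizes $\tau_i$. One detail differs: you bound the bias by $M\tau_i^{-(q-1)}$ via Markov on the $q$th moment alone, whereas the paper uses Cauchy--Schwarz to get $2\sigma\sqrt{M}\,\tau_i^{-q/2}$. The prescribed $\tau_i$ is the balancer for the \emph{paper's} bias bound, not yours, so your remark that it ``exactly balances'' $M\tau_i^{-(q-1)}$ is off; but since for $q>2$ your bound decays faster in $\tau_i$ (and for $q=2$ the two agree up to constants), the stated conclusion still follows. Also, Bernstein does not stay purely in its sub-Gaussian regime here: the sub-exponential piece $\tau_i\log N/T$ is exactly what gets balanced against the bias, so you should track both pieces rather than asserting $\tau_i\delta\lesssim\sigma^2$.

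Your part (ii) is a genuinely different route. The paper introduces the \emph{population} Huber minimizer $\mu_{i,\tau}=\arg\min_\mu\mathbb E\psi_{\tau_i}(y_{it}-\mu)$ and separately bounds the bias $|\mu_{i,\tau}-\mathbb Ey_{it}|$ (via an algebraic comparison of $z^2$ with $\tau^2\psi_\tau(z)$) and the variance $|\widehat y_i-\mu_{i,\tau}|$ (via a localization argument showing the empirical loss strictly increases on the sphere of radius $\asymp(\sigma+1)\sqrt{\log N/T}$ around $\mu_{i,\tau}$, using Bernstein on the score plus Hoeffding on a tail indicator). Your monotone-score argument bypasses $\mu_{i,\tau}$ entirely and is shorter; it works because the Huber score is nondecreasing, so the sign of the empirical score at $\mu^*\pm\delta$ pins down $\widehat y_i$ directly. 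The paper's localization approach is more general (it would extend to non-monotone scores), but for this specific loss your approach is cleaner. Both routes incur a $\sigma^2/\tau_i$-type bias term---yours through $\mathbb E\mathcal S(\delta)$, the paper's through $|\mu_{i,\tau}-\mathbb Ey_{it}|$---which with $\tau_i\asymp\sqrt{T/\log N}$ contributes $\sigma^2\sqrt{\log N/T}$; absorbing this into the stated $(\sigma+1)$ prefactor rather than $(\sigma+1)^2$ implicitly requires $\sigma$ bounded, a point the paper's proof also glosses over.
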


\begin{proof}
See appendix.
\end{proof}

The robust mean estimation also applies to estimating covariance as its $(i,j)$ element is of form $\mathbb E y_{it} y_{jt}$. When the high-dimensional data    have heavy-tailed components,	
we can replace the sample covariance  by its robust version $\widehat \bS_y$ before estimating the factors.
By the Gaussian concentration inequality, the robustly estimated covariance $\widehat \bS_y$ satisfies
$$
\| \widehat \bS_{y}- \bSigma_y\|_\infty=O_P\left(\sqrt{\frac{\log N}{T}}\right),
$$
so long as $\E y_{it}^2y_{jt}^2$ is uniformly bounded (and serial independence is assumed). 

Based on the above robust covariance inputs,  we can create factor    estimators and derive their theoretical properties following the guidance of Section \ref{sec2.2}.   See Chapter 10 of \cite{fan2020statistical} for further generalizations.



\subsection{Use of cross-covariance}

When factors are highly persistent but $\mathbb E \bu_t \bu_{t-h}^T = 0$, then the cross-covariance
$$
 \bSigma_h = \mathbb E \by_t \by_{t-h}' = \bB (\mathbb E \bff_t \bff_{t-h}) \bB', \qquad h \geq 1
$$
contains valuable  information about $\bB$. This motivates to estimate loadings by  applying PCA to  aggregated $\{\bSigma_h: h=1,\cdots\}$, and we studied by    \cite{lam2012factor}.  A related  idea has been extended to matrix-variate PCA \citep{wang2019factor,chen2020constrained}. \cite{fan2018optimal} also provided a procedure to efficiently aggregate the cross-covariance information with the covariance information when $h=0$.


\subsection{Which method to use?}

Many references have documented the comparisons among various estimation methods.   \cite{westerlund2013estimation} made a comparison between PCA and cross-sectional averages in the panel data setting.   Meanwhile,  the PCA and low-rank penalized regressions are practically very similar. So   we do not distinguish their use in practice.  In general, because of the simplicity for implementations and relatively weak required conditions, the PCA still seems to be the most widely used method in applied research.  Meanwhile,   robust covariance inputs can also be integrated with the surveyed low-rank recovery methods. 

In addition, when either factors or loadings can be partially explained by observed characteristics, the projected PCA is recommended. 
This is particularly useful in asset pricing applications where the explanatory power of asset characteristics  has been  well documented in the literature. 

\section{Factor-Augmented Inference and Econometric Learning}

\subsection{Forecasts}

Forecasting in a data-rich environment has been an important research topic in economics and finance. Typical examples include forecasts of the aggregate output or inflation rate using a large number of the categorized macroeconomic variables. 

\cite{SW02, bai2006confidence} considered   factor-augmented regression model for $h$-step ahead forecast:
 \begin{align}
y_{t+h}&=\balpha'\bff_t+\bbeta'\bw_t+\varepsilon_t\label{eqli01}\\
\bx_{t}&=\bB\bff_t+\bu_{t}.\label{eqli02}\end{align}
Here  $\bw_t$ in (\ref{eqli01}) is the observed predictors, which  may include lagged dependent variables. Equation (\ref{eqli02}) is a high-dimensional    factor model that includes a vector of latent factors $\bff_t$. The forecast can be implemented by  regressing  $y_{t+h}$ onto $\bw_t$ and estimated factors.  The factor model (\ref{eqli02}) serves as an important dimension reduction tool.


\subsubsection{Inverse regression}
\cite{fan2015sufficient} generalized  (\ref{eqli01}) to the nonlinear model with multi-indices. Consider the following forecasting model:
\begin{align}
 y_{t+1}&=h(\bphi_1'\bff_t, \dots, \bphi_R'\bff_t,\bepsilon_{t+1})\label{eqli05}
\end{align}
where   $h(\cdot)$ is an unknown link function, and $\bepsilon_{t+1}$ is the error independent of $\bff_t$ and $\bu_{t}$.   Vectors   $\bphi_1, \dots, \bphi_R$ are $r$-dimensional linear-indepencent prediction indices. In contrast with linear forecasting, the above model specifies that the predicting function is nonlinear and depends on multiple indices of extracted factors. If we specify  $R<r$, further dimension reductions are achieved.

A prominent result related to  model (\ref{eqli05}) is given by \cite{li1991sliced}, which shows that under some regularity conditions such as $\bff_t$ is elliptically symmetric, we have
\begin{equation}\label{eqjf02}
  \mathbb   E(\bff_t| y_{t+1})=\bPhi \ba( y_{t+1}),
\end{equation}
for  a $R$-dimensional vector $\ba( y_{t+1})$,
where $\bPhi=[\bphi_1, \bphi_2, \dots, \bphi_R]$ is an $r\times R$ matrix. In other words, the ``inverse regression vector" $  \mathbb   E(\bff_t| y_{t+1})$ falls in the column space spanned by $\bPhi$, which can be extracted by PCA.  Indeed, since $ \mathbb   E( \mathbb   E(\bff_t| y_{t+1}))= \mathbb   E(\bff_t)=0$,
\[\cov( \mathbb   E(\bff_t| y_{t+1}))=\bPhi  \mathbb   E[\ba( y_{t+1})\ba( y_{t+1})']\bPhi'\]
The above matrix has $R$ nonvanishing eigenvalues if $ \mathbb   E[\ba( y_{t+1})\ba( y_{t+1})']$ is non-degenerate. Their corresponding eigenvectors have the same linear span as $\bphi_1, \dots, \bphi_R$ do. If one can consistently estimate $\cov( \mathbb   E(\bff_t| y_{t+1}))$, then the subspace spanned by $\bphi_1, \dots, \bphi_R$, which is of our primary interests, can be obtained by extracting the top $R$ eigenvectors of the estimated covariance matrix that correspond to the $R$ largest eigenvalues.

However, it is not an easy task to directly estimate the covariance of $ \mathbb   E(\bff_t| y_{t+1})$.  \cite{li1991sliced} suggested the \textit{sliced covariance estimate},   a widely used technique for dimension reductions:  The sliced covariance matrix also satisfies the fundamental property (\ref{eqjf02}), namely $E(\bff_t | y_{t+1} \in \bI_k)$ falls in the column space spanned by $\bPhi$ for any given partition of the range of $ y_{t+1}$ into $H$ ``slices" $\bI_1, \bI_2, \dots, \bI_H$.  Correspondingly, let
\begin{align}\widehat{\cov( \mathbb   E(\bff_t| y_{t+1}))}&=\frac1H\sum_{h=1}^H\bigg[\frac1{\sum_{t=1}^T1( y_{t+1}\in\bI_h)} \sum_{t=1}^T\bff_t1( y_{t+1}\in\bI_h)\bigg]\nonumber\\
&\quad\qquad\quad\times\bigg[\frac1{\sum_{t=1}^T1( y_{t+1}\in\bI_h)} \sum_{t=1}^T\bff_t1( y_{t+1}\in\bI_h)\bigg]',\label{eqli007}\end{align}
which is a nonparametric covariance estimator. The above sliced covariance estimator is based on the observable  factors. If the factors are unknown, they are replaced by their estimators,  which leads to the following sufficient forecasting algorithm based on the factor models.
\begin{algo}
 Sufficient forecasting algorithm based on the factor models.
\begin{description}
\item[Step 1] Estimate factors  in model (\ref{eqli02})   for $t=1,\dots, T$;
\item[Step 2] Construct the covariance estimator as in (\ref{eqli007}) with $\widehat\bff_t$ in place of $\bff_t$;
\item[Step 3] Obtain $\widehat\bphi_1, \widehat\bphi_2, \dots, \widehat\bphi_R$ by the top $R$   eigenvectors of  the covariance in Step 2;
\item[Step 4] Construct the predictive indices $\widehat\bphi_1'\widehat\bff_t, \dots, \widehat\bphi_R'\widehat\bff_t$;
\item[Step 5] Nonparametrically  estimate $h(\cdot)$ with
indices from Step 4, and forecast $ y_{t+1}$.
\end{description}
\end{algo}

Implementing the above algorithm requires   the number of slices $H$, the number of
predictive indices $R$, and the number of factors $r$. In practice, $H$ has little influence on the
estimated directions, as pointed out in \cite{li1991sliced} and explained above that property (\eqref{eqjf02}) holds.
As regard to the choice of $R$, the first $R$ eigenvalues of $\cov(\mathbb E(\bff_t| y_{t+1}))$ must be significantly different
from zero compared to the estimation error. Several methods such as  \cite{li1991sliced}  and \cite{schott1994} have been proposed to determine $R$. For instance, the average of the smallest $r-L$ eigenvalues would follow $\chi^2$ distribution if the underlying factors are normally distributed. The number of factors can be determined by a number of methods. 

\subsection{Factor-adjusted regularized model selection}

 Consider a high-dimensional regression model
  \begin{eqnarray}\label{eq4.1}
  y_{t} &=& \bbeta'  \bg_t+ \bnu'\bx_t +\eta_t,\cr
  \bg_t&=& \btheta'\bx_t +\bvarepsilon_{g,t}
  \end{eqnarray}
  where  $\bg_t$ is a  treatment variable whose effect $\bbeta$ is of the main interest. The model contains   high-dimensional  exogenous control variables $\bx_t=(x_{1t},\cdots,x_{Nt})$ that determine both the outcome and treatment variables. Having  many control variables creates challenges for statistical inferences, as such,  we assume that  $(\bnu, \btheta)$ are  sparse vectors.   

 Control variables are often strongly correlated due to the presence of  confounding factors
  \begin{equation}\label{eq4.2}
  \bx_t=\bB\bff_t+\bu_t.
  \end{equation}
This  invalidates    conditions of using penalized regressions to directly select among $\bx_t$.  Instead, if we substitute (\ref{eq4.2}) to (\ref{eq4.1}), we reach a  factor-adjusted regression model:
  \begin{eqnarray}\label{eq4.3}
   y_{t} &=&\balpha_y'\bff_t+  \bgamma'\bu_t+   \bvarepsilon_{y,t},\cr
  \bg_t&=& \balpha_g'\bff_t +\btheta'\bu_t+\bvarepsilon_{g,t},\cr
  \bvarepsilon_{y,t}&=&\bbeta'  \bvarepsilon_{g,t}    +\eta_{t}
  \end{eqnarray}
  where  $\balpha_g'=\btheta'\bB$,  $\balpha_y'=\bbeta  \balpha_g'+\bnu'\bB$,  and $\bgamma'=\bbeta  \btheta'+ \bnu'$.  Here $(\balpha_y, \balpha_g,\bbeta)$ are low -dimensional coefficient vectors    while $(\bgamma, \btheta)$ are high-dimensional sparse vectors. Importantly, the model contains  high-dimensional latent controls $\bu_t$, which are weakly dependent due to the nature of idiosyncratic noises.
  The use of $\bu_t$ instead of $\bx_t$ validates conditions for many high-dimensional variable selection methods.  

 \cite{fan2020factor} 
 and \cite{hansen2018fac} showed that the penalized regression can be successfully applied to (\ref{eq4.3}) to  select  components in $\bu_t$, which are    cross-sectionally weakly correlated. Motivated by \cite{belloni2014inference}, the  algorithm can be summarized as follows.  For notational simplicity, we focus on the univariate case $\dim(\bbeta)=1$.

\begin{algo}\label{ag4.1}
Estimate $\bbeta$ as follows.
\begin{description}
\item[Step 1] Estimate $\{(\bff_t, \bu_t): t\leq T\}$  from (\ref{eq4.2})   to obtain $\{(\widehat\bff_t, \widehat\bu_t): t\leq T\}$.

\item[Step 2] Run penalized variable selections on $\widehat\bu_t$:
\begin{eqnarray*}
( \widehat\bgamma,\widehat\balpha_y)&=&\arg\min_{\bgamma,\alpha_y} \frac{1}{T}\sum_{t=1}^T(y_t-\balpha_y'\widehat\bff_t-\bgamma'\widehat\bu_t)^2+ P_{\tau}(\bgamma),\cr
 ( \widehat\btheta,\widehat\balpha_g)&=&\arg\min_{\btheta} \frac{1}{T}\sum_{t=1}^T(\bg_t-\balpha_g'\widehat\bff_t-\btheta'\widehat\bu_t)^2+ P_{\tau}(\btheta).
\end{eqnarray*}

Obtain residuals:
$\widehat \bvarepsilon_{y,t}=y_{t} -( \widehat\balpha_y' \widehat\bff_t+   \widehat\bgamma' \widehat\bu_t),$ and $ \widehat \bvarepsilon_{g,t}=\bg_{t} -( \widehat\balpha_g' \widehat\bff_t+   \widehat\btheta' \widehat\bu_t).$

\item[Step 3] Estimate  $\bbeta$ by residual-regression:
$
\widehat\bbeta=(\sum_{t=1}^T\widehat\bvarepsilon_{g,t} ^2)^{-1}\sum_{t=1}^T\widehat\bvarepsilon_{g,t} \widehat\bvarepsilon_{y,t}.
$
\end{description}
\end{algo}

  Note that $\bgamma:\to P_\tau(\bgamma)$ is a sparse-induced penalty function with a tuning parameter $\tau$. 
  When $\btheta$ and $\bgamma$ are sufficiently sparse,  
  and the PC-estimator is used in  step 1 with the correct selection of the number of factors,
    the above procedure is asymptotically valid:
  \begin{equation}\label{eq3.4}
 \sigma_{\eta, g}^{-1} \sigma_g^{2}  {\sqrt{T}(\widehat\bbeta-\bbeta)}\overset{d}{\longrightarrow}\mathcal N(0,1),
\end{equation}
where $\sigma_g^2$ and $\sigma_{\eta, g}^2$ are the asymptotic variances of $\bvarepsilon_{g,t}$ and $\eta_t\bvarepsilon_{g,t}$.

More recently, \cite{fan2019learning}  showed that the assumption of correct selection of the number of factors can be relaxed if we use the diversified projection in step 1 instead, and  (\ref{eq3.4}) is still valid as long as we select $R\geq r$ factors (over selection).  Importantly, this admits  $r=0$, and $R\geq 1$  as a  special case,  i.e., there are no factors so that $\bx_t=\bu_t$ itself is cross-sectionally weakly dependent, but nevertheless we estimate $R\geq 1$ number of factors to run post-selection inference to alleviate the dependence among $\bx_t$. This setting   is   empirically relevant as it allows  to avoid pre-testing the presence of common factors for inference.

\begin{figure}[h]
\includegraphics[width=3in]{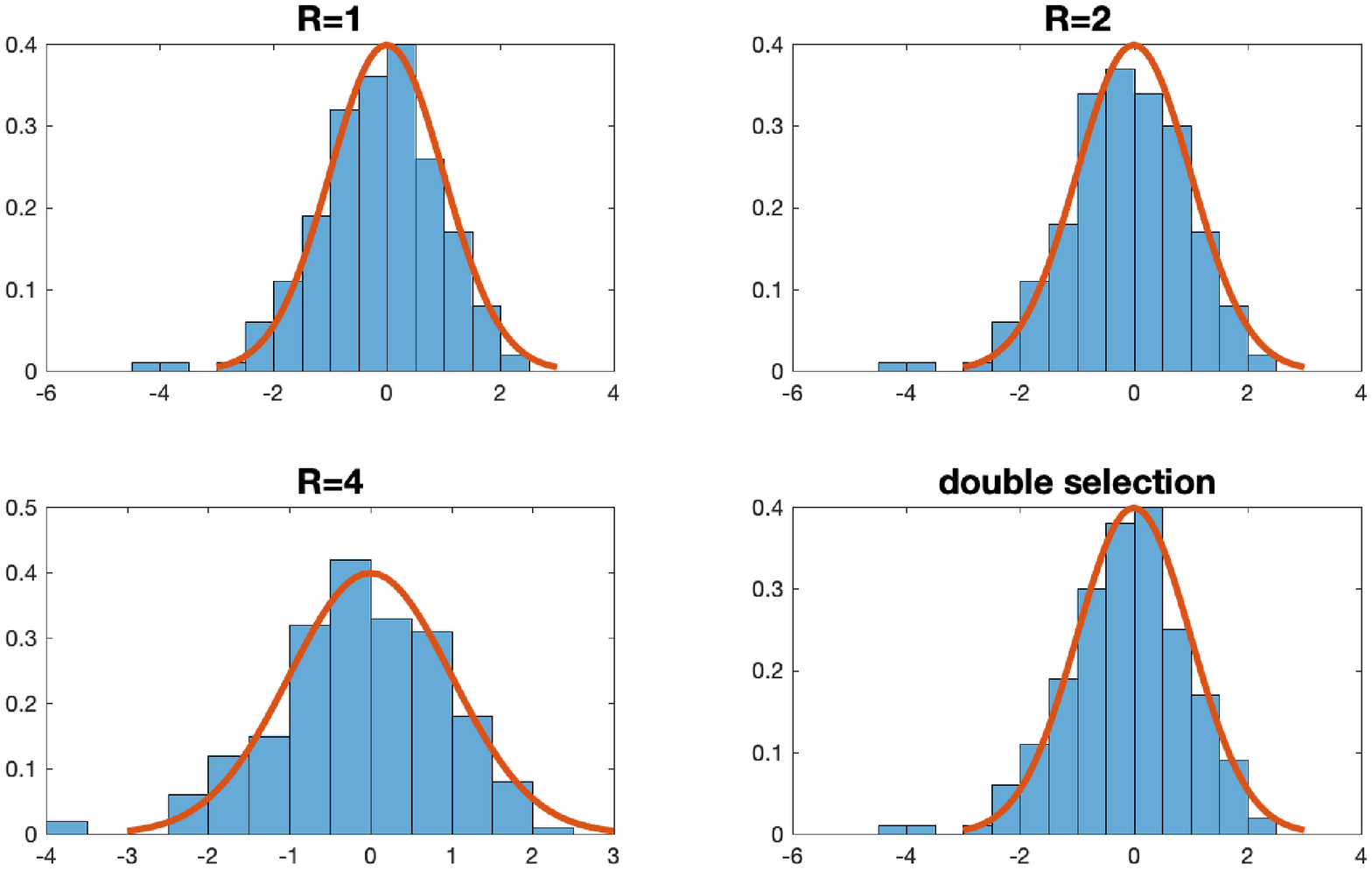}
\includegraphics[width=3in]{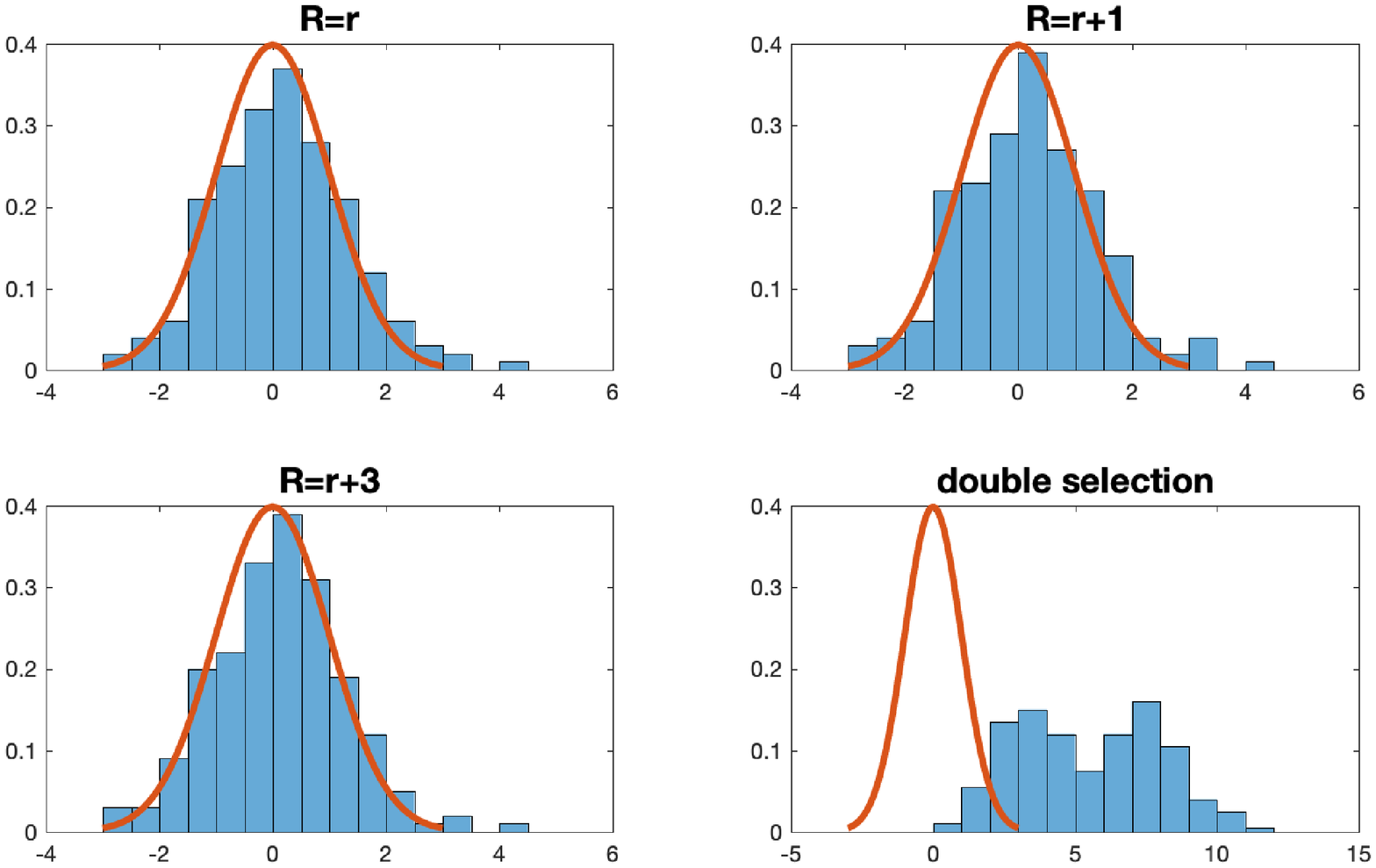}
\caption{Histograms of the standardized estimates (\ref{eq3.4}) over 200 replications, superimposed with the standard normal density. The panel of   ``double selection" corresponds to directly selecting among $\bx_t$ (corresponding to $R=0$, no factor adjustment), while all other panels correspond to using diversified factor (DP) estimators with $R$ number of working factors.   Top four figures correspond to $r=0$ and bottom four figures correspond to $r=2$. When $R \geq r$, (\ref{eq3.4}) holds, whereas when $R < r$, (\ref{eq3.4}) is violated. Figure source: \cite{fan2019learning}.}  
\label{fig1}
\end{figure}

Figure \ref{fig1}, taken from \cite{fan2019learning}, plots the histograms of the t-statistics based on  estimated $\bbeta$ over 200  simulations, superimposed with the standard normal density, where   $R$ diversified projections are used to estimate factors in step 1. Here the weights  are the initial transformations ($t=0$) so   that the $i^{th}$ row of $\bW$ is $ (x_{it}, x_{it}^2,\cdots,x_{it}^R)$ at $t=0$.
  The ``double selection" is the algorithm used in \cite{belloni2014inference} that directly selecting among $\bx_t$, corresponding to the case $R=0$.
The factor-augmented algorithm works well even if $r=0$; but when    $r \geq 1$ factors are present, ``double selection"  leads to severely biased estimations.

Therefore as a practical guidance, we recommend that one  should always run factor-augmented post-selection inference, with  $R\geq 1$, to guard against confounding factors  among the control variables.

  \subsection{Factor-adjusted robust multiple testing}

\subsubsection{False discovery rate control}

Controlling the false discovery proportion  (FDP) in large-scale hypothesis testing based on strongly dependent tests has been an important   problem in many scientific discoveries across disciplines. See \cite{fan2019farmtest} and references therein, and \cite{Barras:2010fh,Harvey2016, harveyliu18, giglio2019thousands} for applications in empirical asset pricing.

Suppose we observe realizations of  a   random vector $\{\by_t=(y_{1t},\cdots,y_{Nt})'\}_{t=1}^T$. Let $\balpha=(\alpha_1,\cdots,\alpha_N)'$ denote its mean vector. We are interested in   testing  individual hypotheses:
$$
H_{0}^i: \alpha_i=0,\quad i=1,\cdots, N.
$$

Let $p_i$ denote the $p$-value for testing   $H_0^i$ based on a test statistic such as $t$-test, which rejects if $p_i<x$ given some critical value $x$.
 Define  the number of false discoveries (rejections)  and   the total number of rejections  as follows:
    \begin{eqnarray*}
  	\mathcal{F}(x) & = & \sum_{i=1}^{N}1\{i: p_{i}<x\text{ and $H_0^i$ is true}\},\qquad
  	\mathcal{V}(x) = \sum_{i=1}^{N}1\{i: p_{i}<x\}.
  \end{eqnarray*}
In large-scale multiple testing problems,        researchers often aim to control the \textit{false discovery proportion} (FDP) and the \textit{false discovery rate} (FDR)   defined by
\[
\text{FDP}(x)= \frac{\mathcal{F}(x)}{\max\{\mathcal{V}(x),1\}},\quad \text{FDR}(x)=\mathbb E \{\text{FDP}(x)\}.
\]
The goal is to find the critical value $x$ so that FDR$(x)\leq\tau$ for a desired level $\tau$ (e.g., 0.10) or more relevantly  FDP$(x)\leq\tau$ with high confidence. While $\mathcal V(x)$ is known, $\mathcal F(x)$ is not in practice. A general principle of finding $x$ proceeds as the following two steps.

\begin{algo}\label{ag45.1}

General principle for FDP/FDR control.
\begin{description}
\item[Step 1.]   Find $\bar{\mathcal F}(x)$ such that    either it upper bounds  $\mathcal F(x)$  for all $x\in(0,1)$, or it estimates $\mathcal F(x)$   uniformly well.

\item[Step 2.]   Set the critical value  to $x^*=\sup\{x\in(0,1): \bar{\mathcal{F}}(x)\leq \tau  \max\{\mathcal{V}(x),1\}\}$.

 \end{description}

 \end{algo}


One of the most popular procedures, proposed by \cite{benjamini1995controlling}, proceeds as follows.    Denote $p_{(1)}\leq \cdots \leq p_{(N)}$ as the sorted p-values  for the individual tests. Then the critical value  is set to
$$x^*=  \max\{p_{(i)}: p_{(i)}\leq \tau i/N \}.
$$
This method  fits into Algorithm \ref{ag45.1} with $\bar {\mathcal F}(x)= Nx $, which is an asymptotic upper bound for $\mathcal F(x)$ when the individual p-values are independent.
One of the limitations of this upper bound is that it is too conservative if the number of true negatives is small compared to $N$. More fundamentally, it requires the test statistics be weakly dependent, a topic we shall discuss in more detail next.
Other methods, such as \cite{storey2002direct, fan2012estimating}, etc., aim to directly estimate $\mathcal F(x)$ in step 1 in the presence of strong dependence among test statistics, and are also adaptive to the unknown number of true negatives.

 In addition, instead of  Algorithm \ref{ag45.1},    \cite{romano2007control,romano2008control} provided  alternative procedures for FDR control.

\subsubsection{Removing dependence by factor adjustments}

The key to the  success of  FDR control   is  that the individual  test statistic  should be either weakly dependent or independent.   This makes the FDR and FDP approximately the same and easier to control. On the other hand, suppose the cross-sectional dependence of   $\by_t $ is  generated from a latent factor model:
\begin{equation}\label{eqjf03}
\by_{t}= \balpha + \bB\bff_t+ \bu_{t}, \quad \E (u_t|\bff_t)=0,
\end{equation}
where $\E \bff_t=0$,  and $\balpha $  is the mean vector.
In empirical asset pricing,  the model can be  used to identify  nonzero alphas out of a large number of assets, and has  been   studied  to identify skilled mutual fund managers, e.g.,  \cite{Barras:2010fh} and \cite{Harvey2016}.  The presence of latent factors, however,  leads to    strong   dependence among the t-statistics based on the naive  sample means of $\by_t$, which  invalidates the weak dependence assumptions.
As well documented in the literature, strong dependence  creates  fundamental challenges to multiple testing, including large standard errors  among the estimated $\alpha_i$, unstable FDP's, and conservativeness of the test procedure.  Learning dependence $\bB \bff_t$ and removing it from the model (\ref{eqjf03}) make the data not only weakly dependent  but also less noisy (from $\bB \bff_t+\bu_t$ to $\bu_t$).  This is the basic idea in factor-adjusted robust multiple tests (FarmTest)  by using factor-adjusted data $\{\by_{t}- \widehat{\bB}\widehat{\bff_t}\}_{t=1}^T$; see (\ref{eqjf03}). Furthermore,  \cite{fan2019farmtest}  makes adjustments  so that it is also robust to heavy tailed data. 

\begin{figure}[h]
\includegraphics[width=3in]{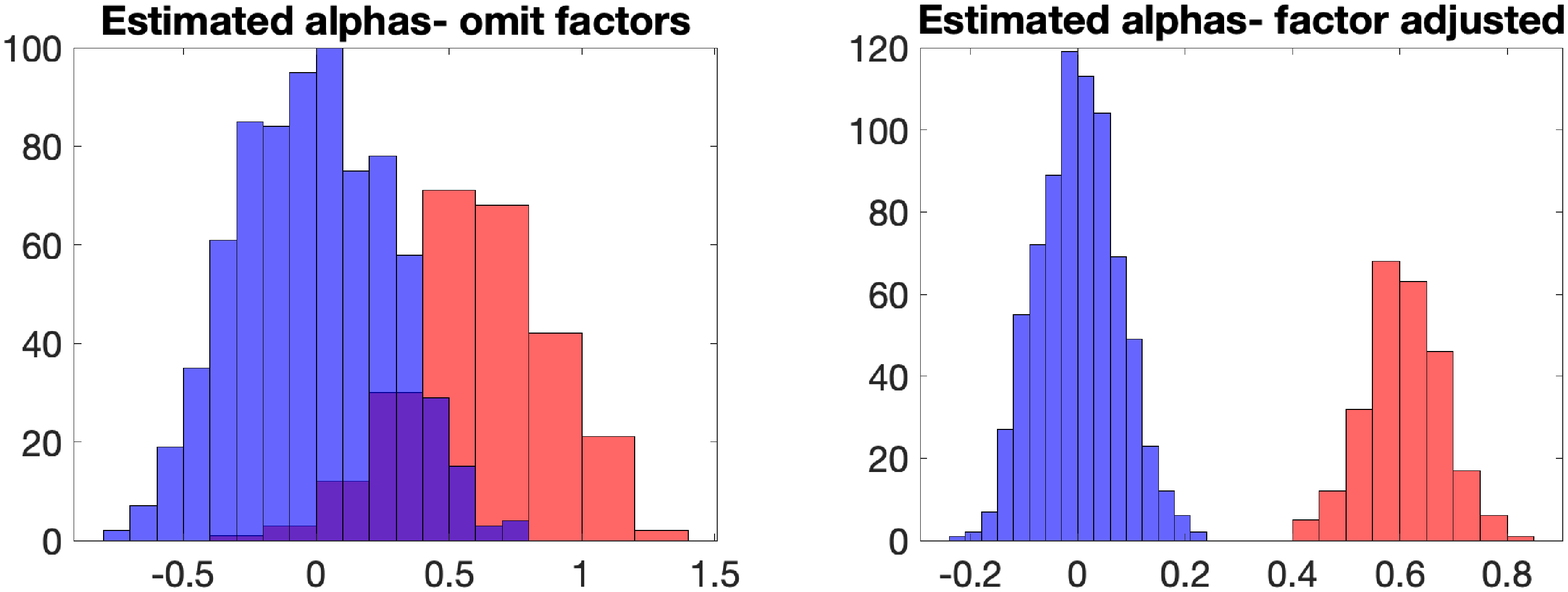}
\includegraphics[width=3in]{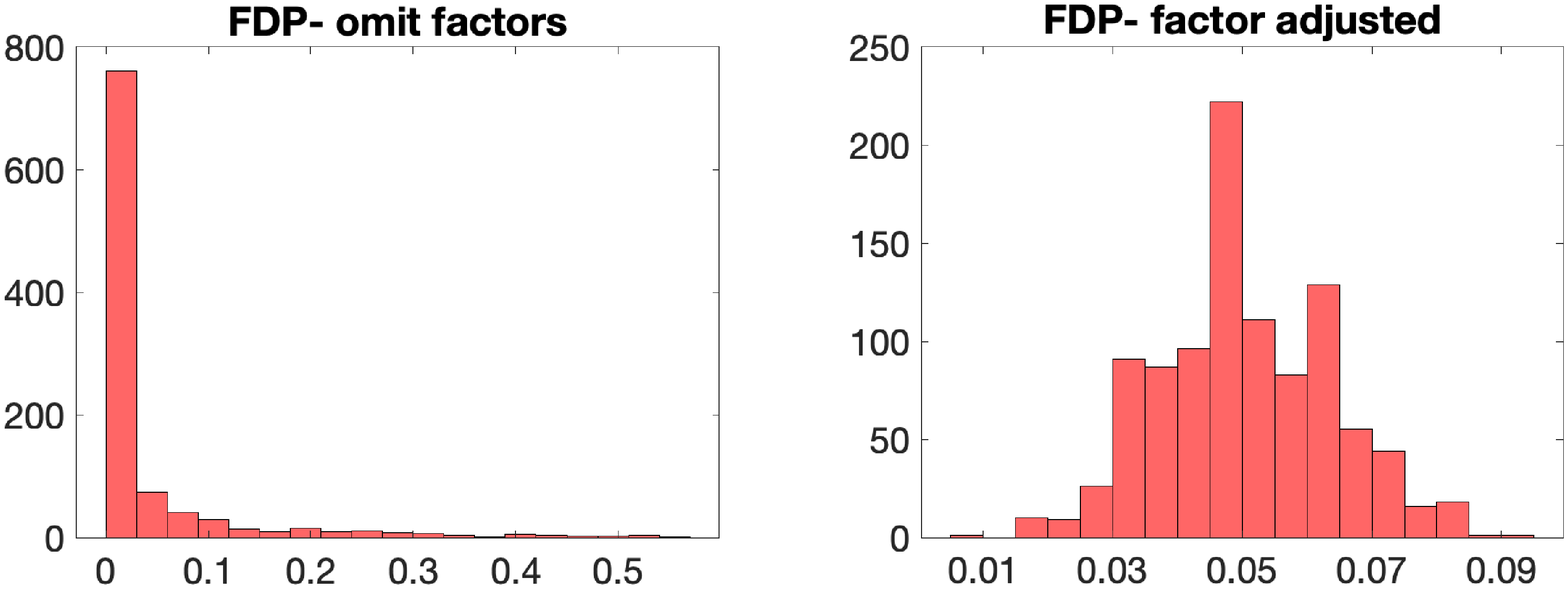}
\includegraphics[width=3in]{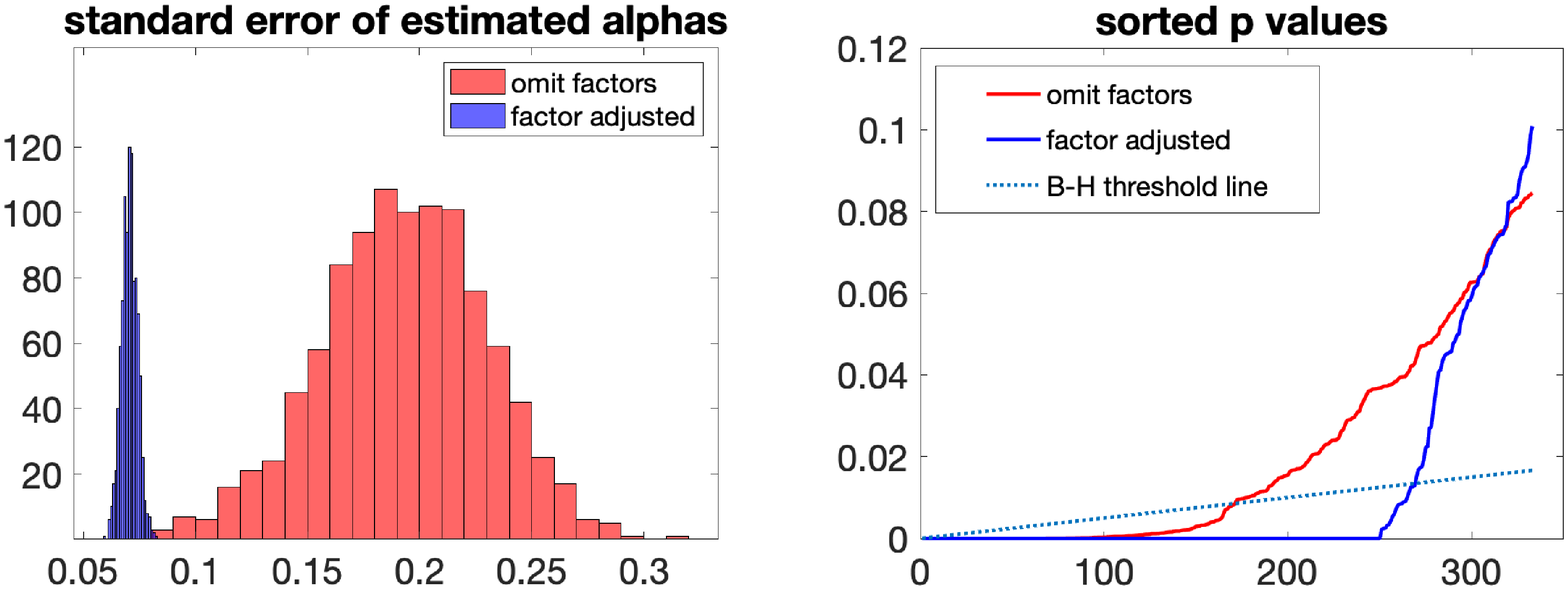}
\caption{Comparison between the sample mean (omit factors) and the factor-adjusted method, with $T=200$ and $N=1000$. The upper panels plot the histograms of estimated individual alphas from a single simulation; the middle panels plot the individual FDP's  over 1000 simulations.  The bottom left panel plots the cross-sectional   histograms of standard errors of the estimated alphas over 1000 simulations. The bottom right panel plots the sorted p-values from a single simulation.  The B-H procedure rejects all the hypotheses if   $p_{(i)}$  is below the B-H threshold line $f(i):=\tau i/N$. }
\label{fig2}
\end{figure}

 To illustrate  consequences of omitting  adjusting latent factors   as well as the   effectiveness of  the use of  the factor-adjusted method (to be detailed below), let us  consider a numerical example of a single factor model,  where elements of $\bu_t$,  $\bff_t$ and $\bB_t$ are generated  from the standard normal distribution. We take the true means to be $\alpha_i=0.6$ for $1\leq i\leq N/4$ and 0 otherwise, and compare two  estimated $\alpha_i$: 1) the sample means  of $\by_t$, without using factor adjustments; 2) the factor-adjusted estimator based on PCA.
We apply the method of \cite{benjamini1995controlling} for multiple testing,  setting $\tau=0.05$.

The top panels of Figure \ref{fig2} plot the histograms, from a single simulation, of the estimators for $\alpha_i$, corresponding to those that satisfy the null hypotheses $\alpha_i=0$ and those that satisfy the alternatives $\alpha_i=0.6$.
Clearly, there is a large overlap (on the upper left panel)
between sample means from the null and alternative,   making tests based on sample means difficult to distinguish
the alternatives from the nulls. 
In contrast, the PCA-based estimator can easily separate the nulls and alternatives, as shown on the upper right panel in Figure \ref{fig2}.

The middle two panels of Figure \ref{fig2} plot the histograms of the true FDP over 1000 simulations based on  the two estimators.
 It is evident that the distribution of the FDP corresponding to the factor-adjusted estimator concentrates around the nominal level.  In contrast, the one based on the sample mean has a noticeable long tail as well as a larger mean and variance, which demonstrate the challenge to control FPD in presence of common factors, as explained above.

 Finally, omitting confounding factors would lead to larger standard errors and conservative inference.  The bottom two panels in Figure  \ref{fig2} plot the standard errors of individual estimated alphas and the sorted p values for the two estimation methods. The sample-mean estimator  has much   fewer sorted p-values below the B-H threshold line (i.e., fewer rejections), compared to the factor-adjusted estimator.

 Hence it is recommended to estimate and remove the   latent factors before applying standard FDR control algorithms. 

\subsubsection{Identifying skilled hedge funds}

\cite{giglio2019thousands} studied the problem of identifying hedge  funds that are able to produce positive alphas (i.e., have ``skill"), among thousands of existing funds. They considered a  linear   pricing model, where hedge fund returns are:
$$
y_{it}=\alpha_i+\bb_i'\blambda+\bb_i'(\bff_t-\mathbb E\bff_t)+u_{it}.
$$
In the model $\bff_t$ contains both observable and latent factors. The model allows  nontradable observable factors and $\blambda$ is the vector of  factor risk premia.

At a broad level, their  methodology proceeds as   the Fama-MacBeth regression integrated with the PCA to extract latent factors:

\begin{algo}\label{ag45.1adf}

Estimating alphas in the presence of latent and nontradable factors.

\begin{description}
\item[Step 1.] Run fund-by-fund time series
regressions to estimate fund exposures (betas) to  observable   factors.

\item[Step 2.] Apply PCA to the residuals to recover the latent factors and   betas.

\item[Step 3.] Implement cross-sectional regressions like Fama-MacBeth to estimate the risk premia of
the factors (including   both observable
and  latent factors) and the alphas.

\end{description}

 \end{algo}

Because of many negative alphas from unskilled hund managers, the multiple testing problem should be properly formulated as one-sided hypotheses:
$$
H_{0}^i: \alpha_i\leq 0,\quad i=1,\cdots, N.
$$
Hence  rejecting $H_0^i$ indicates skilled fund manger $i$. On the other hand,    the existence of  potentially a very large number of negative alphas gives rise to the issue of power loss, only to add   noises to the model. The loss of power associated with testing inequalities is well known as  the problem of ``deep in the null", and is often seen in the econometric literature.   To address this issue,
\cite{giglio2019thousands} proposed to first screen off very bad funds, identified as:
$$
\mathcal I=\{i\leq N,\widehat\alpha_i/\text{se}(\widehat\alpha_i)<-c_{NT}\}
$$
where $c_{NT}>0$ is a slowly growing sequence to ensure sure screening \citep{fan2008sure}: $P(\mathcal I\subseteq \mathcal H_0)\to 1$. They recommended to apply   FDR control algorithms on funds outside $\mathcal I$. Therefore,
there are two ingredients that are recommended for   identifying skilled fund managers via multiple testing:
(1) adjust the effect of latent factors, and (2) remove the estimated alphas that are deep in the null.  Both are  playing essential roles of  gaining good testing power.

\subsection{Instrumental variable regression} The  issue of endogeneity is often encountered in real data applications. 
Consider  the following instrumental variable (IV) regression model
\[y_t=\bw_t'\bbeta^0+\varepsilon_t=\bw_{1t}'\bbeta_1+\bw_{2t}'\bbeta_2+\varepsilon_t,\]
where $\bw_{1t}$ is a $k_1$-dimensional vector of exogenous regressors and $\bw_{2t}$ is a $k_2$-dimensional vector of endogenous regressors. 
 Meanwhile, we have an $N$-dimensional IV $\bx_t$ which admit a factor structure: 
$$
\bx_t=\bB\bff_t+\bu_t.
$$

 Below we introduce four  estimators for $\bbeta^0$, which differ on their choices of the instruments.
 
 \textbf{Use $\bff_t$  as the instruments.}  Project $\bw_{2t}$ on $\bff_t$:
\[\bw_{2t}=\bphi'\bff_t+\bv_{t},\quad \mathbb E(\bv_t|\bff_t)=0\]
where $\bphi$ is a $k_2\times r$ matrix. 
We need $r\ge k_2$ for identification.  Let $\bz_t=(\bw_{1t}', \bff_t')'$ be the set of instruments. As $\bff_t$ is unobservable, we replace   it with some factor estimator and apply the two  stage least squares estimator $\widehat\bbeta_\bff$ with the feasible instruments.

\cite{bai2010}  studied this estimator, and showed that   the estimation errors of  $\widehat\bbeta_\bff$ associated with the generated instruments (factor estimations) have no effect on the limiting variance.   When $\bu_t$ and $\varepsilon_t$ are uncorrelated,  this  only requires  $(N,T)\to\infty$ regardless of the relative growth rates.  When some weak correlations are present but   $\|\mathbb E \varepsilon_t \bu_{t}\|_1=O(1)$,  we would require $\sqrt{T}=o(N)$ to offset the effect of estimating factors.



 \textbf{Use $\bx_t$  as the instruments.}   Project $\bw_{2t}$ on $\bx_t$:
\begin{equation}\label{eq4.11}
\bw_{2t}=\btheta\bx_t+\be_{t},
\end{equation}
 where $\btheta$ is a  $k_2\times N$ coefficient matrix.  This projection motivates the use of $\bx_t$ directly as   a set of high-dimensional IV.   Suppose that $\varepsilon_t$ is an i.i.d process, then the two-stage least squares estimator is efficient, and is given by
\[\widehat\bbeta_{\bx}=\Big(\bW'\bX\widehat  \bSigma_x^{-1}\bX'\bW\Big)^{-1}\bW'\bX\widehat \bSigma_x^{-1}\bX'\bY\]
where $\bX$ is $T\times N$ matrix of $\bx_t$; $\bW$ and $\bY$ are matrices of $\bw_t$ and $y_t$.
Note that $\widehat \bSigma_x$ is the  estimated covariance of $\bx_t$, which can be constructed using factor-based covariance estimators as described in Section 3. 
  It is interesting to compare the asymptotic behaviors of   $\widehat\bbeta_\bff$  with $\widehat\bbeta_{\bx}$.
\cite{bai2010} showed when $\bu_t$ and $\bw_t$ are uncorrelated,  they  have the same asymptotic variance, but  $\widehat\bbeta_{\bx}$  has a $O(\frac NT)$ bias term. So  $\widehat\bbeta_{\bx}$ is consistent only if $N=o(T)$. 


 \textbf{Use selected $\bx_t$  as the instruments.}   We still consider the projection (\ref{eq4.11}), but assume that rows of $\btheta$ are sparse vectors so that we can apply penalized regression to select among the components of $\bx_t$:
 \begin{equation}\label{eq4.12}
 \widehat\btheta_j=\arg\min_{\btheta_j\in\mathbb R^N} \frac{1}{T}\sum_{t=1}^T(w_{2t,j}- \bx_t'\btheta_j)^2+P_\tau(\btheta_j),\quad j\leq \dim(\bw_{2t})
 \end{equation}
 where $P_\tau(\btheta_j)$ is a sparse-induced penalty with tuning $\tau$.  Let $ \bx_{t, \text{selec}}$ be  the vector of  selected components corresponding to nonzero components of $\{\widehat\btheta_j: j\leq \dim(\bw_{2t})\}$. \cite{belloni2012sparse}   used $(\bw_{1t}, \bx_{t, \text{selec}})$ as the instruments to compute $\widehat\bbeta_{\bx,\text{selec}}$, the two stage least squares estimator.  This method however, would  not work well in the presence of common factors. The strong dependence  in   $\bx_t$   invalidates  the    variable selection procedure (\ref{eq4.12}).

  \textbf{Use $\bff_t$ and selected $\bu_t$  as the instruments.} We are not aware of any applications of this method in the IV literature, but it is still well motivated.  Substitute the factor structure to (\ref{eq4.11}), we obtain 
  $$
  \bw_{2t}= \bdelta\bff_t+\btheta\bu_t+\be_t
  $$
  where $\bdelta=\btheta\bB$. Hence we can carry out variable selections  among $\bu_t$: 
   $$
 (\widehat\bdelta_j, \widehat\btheta_j)=\arg\min_{\bdelta_j\in\mathbb R^r, \btheta\in\mathbb R^N} \frac{1}{T}\sum_{t=1}^T(w_{2t,j}- \bdelta_j'\widehat\bff_t-\widehat\bu_t'\btheta_j)^2+P_\tau(\btheta_j),\quad j\leq \dim(\bw_{2t}).
 $$
 Let $ \widehat\bu_{t, \text{selec}}$ be  the vector of  selected components corresponding to nonzero components of $\{\widehat\btheta_j: j\leq \dim(\bw_{2t})\}$. We then use  $(\bw_{1t}, \widehat\bff_t, \widehat \bu_{t, \text{selec}})$ as the instruments to compute $\widehat\bbeta_{\bff, \bu}$, the two stage least squares estimator.  This method is expected  to work well because it marginalizes out the strong factors in $\bx_t$, leaving remaining components $\bu_t$ being weakly dependent.

Let us conduct a simple simulation to study the finite sample behaviors of the aforementioned four estimators.  We consider a model  $y_t=\bw_{2t}'\bbeta^0+\varepsilon_t$, with a single endogenous regressor  $\bw_{2t}$ generated  from (\ref{eq4.11}) with $\be_t= \varepsilon_t/2$. Here $\btheta=(2,1,-1,0...,0)$ and $\bx_t$ admits a two-factor structure. Variables $(\varepsilon_t, \bff_t, \bB,\bu_t)$ are independent  standard normal. Finally, variable selections are based on lasso with the oracle tuning parameter that controls the score of the least squares function. For instance, for problem (\ref{eq4.12}) we set $P_\tau(\btheta)=\tau\|\btheta\|_1$ with $\tau= 2.2\|\frac{1}{T}\sum_t\bx_t\be_t\|_\infty$. 

\begin{table}[h]
\tabcolsep7.5pt
\caption{Comparison among four IV   methods}
\label{tab2}
\begin{center}
\begin{tabular}{cc|cccc|cccc}
\hline
\hline
$N$&$T$ & \multicolumn{4}{c|}{Bias}&\multicolumn{4}{c}{Standard deviation}  \\
 &  & $\widehat\bbeta_{\bff}$  &$\widehat\bbeta_{\bx}$&  $\widehat\bbeta_{\bx,\text{selec}}$&$ \widehat\bbeta_{\bff, \bu} $&$\widehat\bbeta_{\bff}$&$\widehat\bbeta_{\bx}$  &$\widehat\bbeta_{\bx, \text{selec}}$&$ \widehat\bbeta_{\bff, \bu}$ \\
\hline
&&&&&&&&&\\
50&100& 0.004  & -0.993& 0.003& 0.008& 0.215& 0.001&  0.061&0.058\\
200&100& 0.008&  -0.996&  0.007& 0.009& 0.143& 8e-4& 0.056& 0.054\\
\hline
\end{tabular}
\end{center}
\begin{tabnote}
Reported is based on  1000 replications. $\widehat\bbeta_\bff$ uses $\widehat\bff_t$ as IV;  $\widehat\bbeta_\bx$ uses $\bx_t$ as IV; $\widehat\bbeta_{\bx, \text{selec}}$ selects $\bx_t$ as IV  using lasso; $\widehat\bbeta_{\bff, \bu}$ uses $(\widehat\bff_t, \widehat\bu_{t,\text{selec}})$ as IV, where $\widehat\bu_t$  are selected using lasso.\end{tabnote}
\end{table}

 Table \ref{tab2} reports the bias and standard deviation of each    estimator calculated from  1000 replications.   First,  using only estimated factors as the instruments ($\widehat\bbeta_\bff$) leads to the largest standard error. This is not surprising because 
 it excludes the relevant information from $\bu_t$ while the latter is correlated with $\bw_{2t}$, so this method  is less efficient. 
 Secondly,  using $\bx_t$  as instruments without variable selection ($\widehat\bbeta_\bx$) has the smallest standard deviation, but is   severely biased.  
 Finally, the two instrumental selection based estimators ($\widehat\bbeta_{\bx,\text{selec}}$ and $\widehat\bbeta_{\bff,\bu}$) perform favorably and similarly. But $\widehat\bbeta_{\bx,\text{selec}}$ is not as stable, as it occasionally selects none of the  instruments in our numerical experiments. 
 

\subsection{Boosting}


Consider the following factor-augmented regression
\begin{equation}\label{eqli07} y_{t+h}= c+\balpha(\bL)'\bw_t+\bgamma(\bL) y_t+\bbeta(\bL)'\bff_t+\varepsilon_{t+h}.\end{equation}
where $\balpha(\bL)=\balpha_0+\balpha_1\bL+\dots+\balpha_p\bL^p$, $\bgamma(\bL)=\bgamma_0+\bgamma_1\bL+\dots+\bgamma_q\bL^q$ and $\bbeta(\bL)=\bbeta_0+\bbeta_1\bL+\dots+\bbeta_l\bL^l$, all are lag operator polynomials. Suppose that $\bw_t$ is a $k$-dimensional vector and $\bff_t$ is an $r$-dimensional vector. The above predictive regression has $n=1+(p+1)k+(q+1)+(l+1)r$ parameters. It is likely that partial parameters are zero. So   model selection devices   can be conducted to choose a parsimonious model. Here we briefly describe a model selection method, known as \textit{boosting}, which was proposed to use by \cite{bai2009boosting} in this context.

Boosting is an ensemble meta-algorithm, which sequentially finds a ``committee'' of base learners and then makes a collective decisions by using a weighted linear combination of all base learners. The first successful and popular boosting algorithm is \textit{AdaBoost} \citep{freund1997}.
\cite{friedman2001} proposes a generic functional gradient descent (FGD) algorithm, which views the boosting as a method for function estimation.
If the squared loss     function is specified, the FGD algorithm reduces to the $L_2$-Boosting, which is studied in \cite{friedman2001} and \cite{buhlmann2003}. Suppose that $( y_t, \bz_t)_{t=1}^T$ are the observed target and predictive regressors over the sample period. The $L_2$-Boosting algorithm for estimating the conditional mean $\mathbb E (y_t|\bz_t)$ is given as follows.

\begin{algo}\label{algo4.2}
 $L_2$-Boosting algorithm
\begin{description}
\item[Step 1] Initialize $\widehat f^{[0]}(\cdot)$ an offset value. The default value is $\widehat f^{[0]}(\cdot)\equiv \bar y$. Set $m=0$.
\item[Step 2] Increase $m$ by 1. Compute the residuals $e_t= y_t- \widehat f^{[m-1]}(\bz_t)$ for $t=1,2,\dots, T$.
\item[Step 3] Fit the residual vector $e_1, \dots, e_T$ to $\bz_1, \dots, \bz_T$ by the real-valued base procedure (e.g., regression):
\[(\bz_t, e_t)_{t=1}^T\xlongrightarrow{base~procedure} \widehat g^{[m]}(\cdot).\]
\item[Step 4] Update $\widehat f^{[m]}(\cdot)=\widehat f^{[m-1]}(\cdot)+\nu\cdot \widehat g^{[m]}(\cdot)$, where $0<\nu\le 1$ is a step-length factor.
\item[Step 5] Iterate steps 2 to 4 until $m=m_{\mathrm{stop}}$ for some stopping iteration $m_{\mathrm{stop}}$.
\end{description}
\end{algo}

One can apply the above $L_2$-Boosting to the factor-augmented predictive regression (\ref{eqli07}). As seen in Algorithm \ref{algo4.2}, one needs to specify the base procedure in step 3. \cite{bai2009boosting} suggest two methods depending on the way to deal with lags, which leads to the component-wise $L_2$-Boosting and block-wise $L_2$-Boosting. In component-wise $L_2$-Boosting, one treats each lag of each variable as an independent predictor and the base procedure is a simple linear regression. Therefore, step 3 is given as follows.

\begin{algo}
Component-wise $L_2$-Boosting
\begin{description}
\item[Step 3.1]  Let $\bz_{t,j}$ denote a typical regressor in the regressors pool with $j=1,2,\dots, n$. Regress the current residual $e_t$ (the residual in the $m$-th repetition) on each $\bz_{t,j}$ to obtain the coefficient $\widehat \bb_j$. Compute the sum of squared residuals, denoted by SSR($j$).
\item[Step 3.2] Determine $j_m$ by
\[j_m=\underset{1\le j\le n}{\mathrm{argmax}}~ \mathrm{SSR}(j).\]
\item[Step 3.3] $\widehat g^{[m]}(\bx_t)=\bz_{t,j_m}\widehat \bb_{j_m}$ if $\bx_t=\bz_{t,j_m}$, and 0 otherwise.
\end{description}
\end{algo}

Another way is to only differentiate the predictors in the current period and treat the predictor and its multiple lags as a block. This gives rise to the block-wise $L_2$-Boosting. 
The base procedure now is a multivariate regression with the regressors being one predictor and its lags.  See \cite{bai2009boosting} for details. 

\subsection{Threshold regression with mixed integer optimization}

 Threshold  regressions have been used in economic applications to capture potential   structural changes on regression coefficients. The early literature models the threshold effect using some observable scalar variable $q_t$ as in:
  $$
y_{t}= \bw_t'\bbeta+\bw_t'\bdelta1\{q_t>\gamma\}+\varepsilon_t,
 $$
 where $\bw_{t}$ and $q_{t}$ are adapted to the filtration $\mathcal{F}_{t-1}$;
 $(\bbeta, \bdelta, \gamma)$ is a vector of unknown parameters,
 and  $\varepsilon _{t}$ satisfies the conditional mean restriction.
Hence when  $q_t > \gamma$, the regression function becomes
 $\bw_{t}^{\prime }(\bbeta +\bdelta)$;  when $q_t \leq \gamma$, it reduces to $\bw_{t}^{\prime }\bbeta$ \citep{chan1993consistency, hansen2000sample}.
  In practice,
 it might be controversial to choose which observed variable plays the role of $q_t$. For example, if the two different regimes represent
 the status of two   environments of the population,  arguably it is difficult to assume that the change of the environment is governed by just a single variable.

\cite{Seo-Linton} and  \cite{leemyung} extended the model to  multivariate threshold:
 $$
y_{t}= \bw_t'\bbeta+\bw_t'\bdelta1\{\bgamma'\bff_t>0\}+\varepsilon_t,
 $$
 where $\bff_t$ is a vector of ``factors" and $\bgamma$ is the corresponding unknown coefficients. So the model  introduces a regime change due to a single index of factors. Allowing multivariate thresholding is important, because it permits the structural change to be governed by  a potentially much larger dataset:
 $
 \bx_t=\bB\bff_t+\bu_t,
 $  where $\dim(\bx_t)=N\to\infty$.
 So $\bff_t$ can be unobserved factors that can be learned from $\bx_t.$
  For the identification purpose, suppose $\frac{1}{T}\sum_t\bff_t\bff_t'=\bI$ and $\bB'\bB$ is diagonal, then $\bgamma$ and $\bff_t$ are separately identified.  This gives rise to the \emph{factor-driven two-regime regression model}.

  A natural strategy to estimate  the model is to rely on least squares:
   $$
   \min_{\bbeta,\bdelta,\bgamma}\sum_{t=1}^T(y_t-\bw_t'\bbeta-\bw_t'\bdelta1\{\bgamma'\widehat\bff_t>0\})^2,
   $$
   where $\widehat\bff_t$ is the plugged-in PC-estimator of factors. Because
the least squares problem is neither convex nor smooth in $\bgamma$, the computational task is demanding.  
 \cite{leemyung} recommended    using algorithms  based on  mixed integer optimization (MIO).   Introduce integers
  $d_t:=1\{\bgamma'\widehat\bff_t>0\}\in\{0,1\} $.
  The goal is to introduce  linear constraints with respect to variables of optimization.
  Suppose
  there are   known upper and lower bounds for $\delta_{j}$:  $L_j \leq \delta_{j} \leq  U_j$,
  where $\delta_{j}$ denotes the $j$th element of $\bdelta$.
 Define $  M_t \equiv \max_{\bgamma\in\Gamma} |  \bgamma'\widehat\bff_t|  $, where $\Gamma$ is the parameter space for $\bgamma$.  Then it can be verified that the least squares problem is numerically equivalent to the following constraint MIO problem:   \begin{align}\label{prob3}
 \min_{\bbeta, \bdelta, \bgamma, \bd, \bell }
  \sum_{t=1}^{T}
 (y_{t}-\bw_{t}'{\bbeta}- \bw_t'\bell_t )^{2}
 \end{align}
 subject to (for any $\epsilon>0$),  for each $t=1,\ldots,T$ and each $j=1,\ldots,\dim(\bw_t)$,
 \begin{align}\label{main-constraints}
 \begin{split}
 &  \bgamma \in \Gamma,  \; d_t \in \{0, 1\}, \; L_j \leq \delta_{j} \leq  U_j, \\
 & (d_t - 1) (M_t + \epsilon) <   \bgamma'\widehat\bff_t \leq d_t M_t, \\
 & d_t L_j \leq \ell_{j,t} \leq d_t U_j, \\
 & L_{j} (1-d_t) \leq \delta_{j} - \ell_{j,t} \leq U_{j} (1-d_t).
 \end{split}
 \end{align}
Then, we can apply  modern MIO packages (e.g., Gurobi) to solve  for the optimal $(\bbeta,\bdelta,\bgamma)$.

Finally,  \cite{leemyung} also derived the asymptotic distribution of the estimated coefficients and proposed inferences based on bootstraps.  Under the condition that $T=O(N)$, they showed that the effect estimating factors  is negligible on the asymptotic distribution of the estimated $(\bbeta,\bdelta)$, but would affect both the rate of convergence and the limiting distribution of the  estimated $\bgamma$.

\subsection{Community detection}\label{sec:commu}

The stochastic block model has been a  popular approach to modeling networks (see \cite{abbe2017community} for a recent review). We observe a graph of $N$ nodes. Let  $\bA=(a_{ij})\in\mathbb R^{N\times N}$ be the adjancy matrix of edges so that $a_{ij}=1$ if nodes $i$ and $j$ are connected, and $a_{ij}=0$ otherwise.   Suppose each node  belongs to one of $r$ communities,  and the community that node $i$ belongs to is denoted by an unknown $\pi_i\in\{1,\cdots,r\}$.  In addition, elements of $\bA$ are random variables. Then stochastic block model assumes that
$$
P(a_{ij}=1|\pi_i=k, \pi_j = l)= w_{k,l},
$$
where $w_{k,l}$ is an unknown probability. We observe the matrix $\bA$ and aim to recover the membership $\pi_i$ and the probabilities $w_{k, l}$ for all $k, l= 1, \cdots, r$.

Let $\be_1,\cdots,\be_r$ denote the canonical basis in $\mathbb R^r$, and $\bb_i=\be_k$ where $\theta_i=k$.
Then, $\bb_i$ indicates the community membership of node $i$, and the membership matrix is
$$
  \bB=(\bb_1,\cdots,\bb_N)',\quad  N\times r,
$$
whose rows represent nodes and columns represent communities.
Let $\bW$ denote the  $r\times r$ matrix of $(w_{k,l})$ and let   $\bL:=\E\bA$. It can easily be seen that       $\bL=\bB \bW \bB' $ is a low-rank matrix, whose rank equals $r$, leading to the following low-rank decomposition:
$$
\bA= \bL+ \bS,\quad \bS= \bA-\E \bA.
$$
Therefore, $\bA$ has the familiar decomposition (\ref{eq2.2}), with $\bL$ being similar to the systematic risk and $\bB$ as a low-rank loading matrix.  Since the elements in $\bS$ are independent with mean-zero (Wigner matrix), the operator norm $\|\bS\|$ does not grow too fast, compared to that of $\bL$.
We can then apply PCA on $\bA$ to estimate $\bB$. Suppose $r$ is known, then the estimator $\widehat\bB$ is defined as $\sqrt{N}$ times the eigenvectors of $\bA$, corresponding to the first $r$ eigenvalues.

Theorem \ref{th1.1} can be applied to obtain a deviation bound for the estimated loading matrix.
If there  is a sequence $g_{N}\to\infty$ and constants $c_1,\cdots,c_r>0$ such that  the eigenvalues $\lambda_i (\bW^{1/2}\bB'\bB\bW^{1/2}) = c_i g_N (1+o_P(1))$ for all $i\leq r$, then there  is an $r\times r$ matrix $\bH$, so that
$$
 \|\widehat\bB-\bB\bH\|_\infty
=O_P(   g_N^{-2} N \|\bS\|
+ g_N^{-1} \sqrt{N\log N}).
$$



Therefore, elements of a rotated $\bB$ can be estimated uniformly well.
  Moreover, because each community has many nodes belong to, $\bB\bH $ has many identical rows, which makes the cluster analysis as a natural method for community detections. For instance,  we can apply either  the K-means cluster analysis, or the \textit{homogeneous pursuit} of \cite{ke2015homogeneity} on the rows of $\widehat\bB$ to consistently identify the communities.



\subsection{Time varying models}

So far we have been assuming  that the  factor loading  and covariance matrices are time-invariant. Research on  conditional factor models has also grown  rapidly in recent years. Suppose
$$
y_{it} = \bb_{i,t}'\bff_t+ u_{it}
$$
where $\bb_{i,t}$ is a time-varying vector of  loadings.   There have been several approaches to addressing  the issues   of  time-varying loadings. In this section we briefly review three of the most commonly used ones: (1) time-varying characteristics, (2) time-smoothing and (3)  continuous-time models.

\subsubsection{Time-varying characteristics}

The first approach models $\bb_{i,t}$ using a function of observed characteristics $\bz_{i,t-1}$:
$$
\bb_{i,t}= \bb_i(\bz_{i,t-1})
$$
where $\bb_i(\cdot)$ is either a linear function or an unknown   nonparametric function of the characteristics. Therefore, the time-varyingness is mainly captured by the characteristics. An advantage of this approach, over the other two approaches to be reviewed below, is that if $\bz_{i,t-1}$ is correctly specified and indeed can fully capture the degree of time-varyingness of the model,    then $\bb_{i,t}$ allows  a large degree of varyingness, and potentially, structural breaks. On the other hand, the limitation of this approach is the potential misspecification of $\bz_{i,t-1}$ and omitted variable problems.
Above all, we refer to   \cite{gagliardini2019estimation} for an excellent   review on conditional factor models using this approach, and their applications in empirical asset pricing.

\subsubsection{Time-smoothing}

 The second approach assumes that factor loadings change smoothly over time. Suppose $\bb_i(\cdot)$ is an unknown smooth function, we assume
 $$
 \bb_{i,t}= \bb_i\left(\frac{t}{T}\right),\quad \forall t\leq T.
 $$
Then locally, $  \bb_{i,t}   \approx  \bb_{i,r}   $  for all $t\approx r$. So in a local window $\mathcal B(r)$ of each fixed $r$, the model is approximately time invariant:
$$
y_{it}\approx \bb_{i,r}'\bff_t+ u_{it},\quad t\in\mathcal B(r).
$$
Motivated by this assumption, \cite{ang2012testing} and \cite{ma2020testing} tested the market mean-variance efficiency assumption in the case of known factor case. In the unknown factor case, \cite{su2017time}  first applied local smoothing on $y_{i,t}$ then employed PCA on the smoothed data to estimate the factors and loadings.  While this approach does not require the specification of time-varying characteristics, it restricts to the smooth varying scenario and thus rules out structural breaks.   In addition, slow rates of convergence appear  near boundaries (that is, the beginning and the end of observing periods).

\subsubsection{High-frequency factor models}

Consider a continuous-time factor model
$$
d\by_t=\balpha_tdt+\bB_td\bff_t+d\bu_t
$$
where $\by_t$, $\bff_t,\bu_t$ are vectors of asset prices, factors and idiosyncratic risks; $\balpha_t$ is a drift term. The time-varying loading matrix $\bB_t$ is an $N\times r$ matrix that is assumed to be continuous and locally bounded It\^{o} semimartingale of the form:
$$
\bB_{t}=\bB_{0} +\int_0^t\widetilde\balpha_sds+\int_0^t\bsigma_{s}d\bW_s,
$$
where $\widetilde\balpha_s$ and $\bsigma_s$ are optional processes and locally bounded; $\bW_s$ is a Brownian motion. Roughly speaking, by the Burkholder-Davis-Grundy inequality (cf. chapter 2 of \cite{jacod2011discretization}), $\bB_t$ is also locally time-invariant, which is similar to the treatment of the time-smoothing approach. The major difference though, is that the use of high-frequency data has automatically ``smoothed" the data.  We refer to the following papers for recent developments on high-frequency factor models, among others:
  \cite{ait2017using, chen2019five, liao2018uniform,li2019jump, pelger2019large}.

\section{Unbalanced Panels}

Missing data and unbalanced panels  are not uncommon in economic and financial studies. Addressing the missing data issue in statistical modeling belongs to a larger category of problems, known as \textit{matrix completion}.
Low-rank matrix completion refers to the problem of  recovering missing entries from low-rank matrices.
It is    particularly relevant to empirical asset pricing   factor models, because many time series of returns have short histories or missing records. In this section we review several  methods for matrix completions, which assume that the missing is at random, except for  \cite{cai2016structured,bai2019matrix}. Besides, the EM algorithm is also a classical approach to dealing with unbalanced panels. We refer to \cite{stock2002macroeconomic, su2019factor,  zhu2019high} for detailed discussions on related issues.


\subsection{Inverse probability weighting}

Recall that the covariance matrix of $\by_t$, under the factor model (\ref{eq3.1}), has the following decomposition,
$
\bSigma_y= \bB\cov(\bff_t)\bB'+\bSigma_u,
$
where columns of $\bB$ are approximately equal to the eigenvectors of $\bSigma_y$ corresponding to the first $r$ eigenvalues.  As such,  let $\widehat\bSigma_y$ be an input matrix, serving as an estimator for $\bSigma_y$. Then as described in Section \ref{sec2.2}, we can estimate the space  spanned by $\bB$ using the leading eigenvectors of $\widehat\bSigma_y$.

In the presence of missing data with exogenous missing, let  $x_{it}=1\{y_{it} \text{ is observed}\}$ and we only observe $y_{it}x_{it}$ for all $(i,t)$, in which unobserved data is set to zero. Suppose for now $w_i:=P(x_{it}=1)$ is known.  We can construct an unbiased estimator $\widehat\bSigma_y=(\widehat\sigma_{ij})$ with
$$
\widehat\sigma_{ij}:=\frac{1}{w_iw_jT}\sum_{t=1}^Ty_{it}y_{jt}x_{it}x_{jt}.
$$
In the matrix form, let  $\bY$ and $\bX$ be the  $N\times T$ matrices  of  $y_{it}$ and $x_{it}$. So we only observe $\bY\circ \bX$, where $\circ$ represents the element-wise matrix product, the Hadamard product.  Also let $\bW $ be the diagonal matrix with $w_i$ being its $i$ th diagonal entry. Then
$$\widehat\bSigma_y= \frac{1}{T} \bZ\bZ',\quad \bZ:=\bW^{-1}\bY\circ \bX.
 $$
Therefore,  columns of the loading matrix estimator $\widehat\bB$ equal to  $\sqrt{N}$ times the   top right singular vectors of $\bZ$. This method simply replaces the missing entries of $\bY$ by zero, and apply the inverse probability weighting (IPW) before applying PCA.  The IPW   has been popularly used
 in the  causal inference literature (e.g., \cite{imbens2015causal}).  Here the same idea is applied to create an unbiased estimator for the covariance matrix.

 In practice, we shall replace $w_i$ by its consistent estimators, such as $\widehat w_i:=\frac{1}{T}\sum_{t=1}^Tx_{it}$.  But  in the case of homogeneous missing, that is, $w_1=\cdots=w_N$, the IPW is not needed, because
 $\bW$ equals the identity matrix up to a constant, which does not affect the PCA on $\bY\circ \bX$.
In addition, factors can be further estimated using least squares by regressing $y_{it}x_{it}$ on the estimated loadings.

Theoretical properties were studied by \cite{abbe2017entrywise, su2019factor} under the assumption of homogenous missing. \cite{su2019factor} used this estimator as their initial value for the EM algorithm.  \cite{xiong2019large} allowed heterogenous missing and proved that the estimators are also asymptotically normal (they estimated $w_iw_j$ directly by $\frac{1}{T}\sum_{t=1}^Tx_{it}x_{jt}$).  We can also quickly derive the rate of convergence by applying Theorem \ref{th1.1}.   However, the IPW  is the least  efficient approach among all the methods to be discussed in this section. We shall verify this in a simulation study in Section \ref{sec:5.sim}.

\subsection{Regularized   matrix completion}
Regularized   matrix completion  is a powerful technique to recover missing entries from low-rank matrices.
This approach   is also much faster than the EM algorithm in handling large panels. Due to these nice properties, it has also attracted much attention in the recent econometrics literature,  e.g.,
 \cite{athey2018matrix, bai2017principal,moon2018nuclear,giglio2019thousands}.

 In the matrix form $\bY=\bM+\bU$, the goal is to recover the factor component $\bM=\bB\bF'$ when $\bY$ has missing elements.    The nuclear-norm regularization is directly applicable:
 \begin{equation}\label{eq5.1}
 \widehat\bM:=\arg\min_{\bM}\|(\bY-\bM)\circ\bX\|_F^2+\lambda\|\bM\|_n
 \end{equation}
 with tuning parameter $\lambda$.  The factors and loadings can be estimated by taking the singular vectors of $\widehat\bM$.    \cite{negahban2011estimation} and \cite{koltchinskii2011nuclear} derived the rate of convergence under the \textit{Frobenius norm}. Under suitable conditions (e.g., missing at random, restricted strong convexity, sufficiently large noise) it can be proved that
 $$
 \frac{1}{NT}\|\widehat\bM-\bM\|_F^2=O_P\left(\frac{1}{T}+\frac{1}{N}\right).
 $$
 \cite{chen2020noisy} certifies further that the convex optimization (\ref{eq5.1}) is optimal for all noise levels under Frobenius norm, operator norm, and elementwise-infinity norm. The proof is based on a novel technical device that bridges the convex optimization with a nonconvex optimization problem.
 However, this estimator is not  asymptotically normal due to the presence of shrinkage bias, so is not suitable for statistical inferences.

 \subsection{Debiased estimators}

 Several recent progress in this literature focuses on debiasing the regularized regression in order to have valid confidence intervals, e.g., \cite{chen2019inference,xia2019statistical,
 chernozhukov2019inference}. When the missing is homogeneous, $P(x_{it}=1)=p$ for all $(i,t)$, \cite{chen2019inference} proposed the following simple debiased estimator
\begin{equation} \label{eqjf04}
  \widehat{\bM}^d = H_R (\widehat{\bM} + \widehat{p}^{-1}(\bY-\widehat{\bM})\circ \bX),
\end{equation}
where $H_R(\cdot)$ is the best rank $R$ approximation in (\ref{eqjf05}), $\widehat{\bM}$ is given by (\ref{eq5.1}), $\widehat{p}$ is the sample proportion of missing data.  The idea is very intuitive.  Ignoring the weak-dependence between $\widehat{\bM}$ and $\bX$ and estimating error in $\widehat p$, we have
$$
\E (\widehat{\bM} + \widehat{p}^{-1}(\bY-\widehat{\bM})\circ \bX) \approx  \E \widehat{\bM} + \E (\bY-\widehat{\bM}) = \bM,
$$
which is approximately unbiased.  However, the estimator $\widehat{\bM} + \widehat{p}^{-1}(\bY-\widehat{\bM}\circ \bX)$ is no longer of rank $R$, which increases the variances.  This leads to use the projection as in (\ref{eqjf04}), which is asymptotically efficient in terms of both rate and pre-constant.

Alternatively,  the  debiasing can be achieved through the iterative least squares \citep{chernozhukov2019inference}.  Suppose the true number of factors, $r$, is known.

\begin{algo}\label{ag5.1}
 Debias using iterative least squares.
\begin{description}
\item[Step 1.]   Obtain $\widehat\bM$ as in (\ref{eq5.1}).

\item[Step 2.]  Let the columns of $\frac{1}{\sqrt{N}}\widehat\bB$ be the left singular vectors of $\widehat\bM$, corresponding to the first $r$ singular values.

\item[Step 3.]    Estimate the latent factors at time $t$ by 
$
\widetilde\bff_t:=\left(\sum_{i=1}^N\widehat\bb_i\widehat\bb_i'x_{it}\right)^{-1}
\sum_{i=1}^N\widehat\bb_iy_{it}x_{it}
$
and let  $\widetilde\bF=(\widetilde\bff_1,\cdots,\widetilde\bff_T)'$.

\item[Step 4.]   Update   loading estimates  by $\widetilde\bB=(\widetilde\bb_1,\cdots,\widetilde\bb_N)'$, where
 $$
 \widetilde\bb_i:=\left(\sum_{t=1}^T\widetilde\bff_t\widetilde\bff_t'x_{it}\right)^{-1}\sum_{t=1}^T\widetilde\bff_ty_{it}x_{it}.
 $$

\item[Step 5.]  The \textit{asymptotically unbiased} estimator for $\bM$ is $\widetilde\bM:=\widetilde\bB\widetilde\bF'.$

 \end{description}

 \end{algo}

   A key technical argument is  to ensure that the estimation error in $\widehat\bB$ (step 2) has no impact on the factor estimator (step 3); this is achieved by \cite{chen2019inference} using an ``auxiliary leave-one-out" argument. 

 When the missing probability $P(x_{it}=1)$ varies across $i$, there are two ways to revise the previous algorithm to achieve the asymptotic normality.  One way is to replace (\ref{eq5.1}) with a weighted regularization:
  \begin{equation}\label{eq5.2}
\min_{\bM}\|(\widehat \bW^{-1/2}\bY-\widehat \bW^{-1/2}\bM)\circ\bX\|_F^2+\lambda\|\bM\|_n,
 \end{equation}
 where $\widehat\bW$ is a diagonal matrix, whose $i$ th diagonal entry equals $\widehat w_i:=\frac{1}{T}\sum_{t=1}^Tx_{it}$.  This debiases the least squares part of the loss function, adopting the same idea of inverse probability weighting.  The remaining steps of Algorithm \ref{ag5.1} are the same.  Then the same ``auxiliary leave-one-out" technical argument of \cite{chen2019inference} still goes through.
 The other way is to apply ``sample splitting", which evenly split the columns of $\bY$ into two parts: on one part we run the penalized regression as in (\ref{eq5.1}) and obtain $\widehat\bB$, on the other part we run iterative least squares. Then exchange the two parts and re-do the estimations. The final estimator is taken as the average of the two. Suppose $u_{it}$ is serially independent, the sample splitting then artificially creates independences   among various statistics  from the  splitting sample.  See \cite{chernozhukov2019inference} for detailed descriptions of this approach.

\subsection{Block-rearrangements}


In an attempt to handle endogenous missing, \cite{bai2019matrix} proposed a block-rearrangement method. At the cost of this generality, they require that the data matrix $\bY$ should have a sufficiently large balanced sub-block after elementary rearrangements.  See \cite{cai2016structured,fan2019structured} for   related ideas.

Specifically, a preliminary step of their estimation is to rearrange the data in a shape that all the factor loadings can be estimated in one sub-block and all the factors can be estimated in another sub-block. The following example is adapted from \cite{bai2019matrix}, which gives a good illustration on this manipulation: example of the $N\times T$ matrix for $y_{it}$:
\[\begin{bmatrix}
\by_{11} & \by_{12} & \by_{13} & \by_{14} & \by_{15}\\
\by_{21} & \by_{22}^* & \by_{23} & \by_{24} & \by_{25}\\
\by_{31} & \by_{32} & \by_{33} & \by_{34}^* & \by_{35}\\
\by_{41}^* & \by_{42} & \by_{43} & \by_{44} & \by_{45}\\
\by_{51} & \by_{52} & \by_{53} & \by_{54} & \by_{55}\end{bmatrix} \Longrightarrow \begin{bmatrix}
 \by_{13} & \by_{15} & {\color{blue}\by_{11}} & {\color{blue}\by_{12}} & {\color{blue}\by_{14}}\\
 \by_{53} & \by_{55} & {\color{blue}\by_{51}}  & {\color{blue}\by_{52}} & {\color{blue}\by_{54}}\\
 {\color{red}\by_{23}} & {\color{red}\by_{25}} & {\color{cyan}\by_{21}} & {\color{cyan}\by_{22}^*} & {\color{cyan}\by_{24}}\\
 {\color{red}\by_{33}} & {\color{red}\by_{35}} & {\color{cyan}\by_{31}} & {\color{cyan}\by_{32}} & {\color{cyan}\by_{34}^*}\\
 {\color{red}\by_{43}} & {\color{red}\by_{45}} & {\color{cyan}\by_{41}^*} & {\color{cyan}\by_{42}} & {\color{cyan}\by_{44}}\end{bmatrix}.\]
The left matrix is the originally collected data and the right is the rearranged one. The symbols with asterisk denote the missing data. From the column perspective, the 1st, 2nd and 4th columns have missing values and therefore are rearranged as the last three columns in the right panel; from the row perspective, the 2nd, 3rd and 4th rows have missing values and therefore are rearranged as the last three rows in the right panel.
  \cite{bai2019matrix}  name the black block ``bal'', name the black plus the red blocks ``tall'', and name the black plus the blue block ``wide''.

  Consider the missing value $\by_{22}^*$. We want to replace it with its expected value $\mathbb E(\by_{22}^*)=\bb_2'\bff_2$. Note that $\by_{22}^*$ shares the same factor loadings $\bb_2$ with data points $\by_{23}$ and $\by_{25}$ in the wide block; and shares the same factors $\bff_2$ with data points $\by_{12}$ and $\by_{52}$ in the tall block. Meanwhile, $\bb_2$ can be estimated using data in the``tall" block; $\bff_2$ can be estimated using data in the ``wide" block.   As a result, one might expect to recover $\mathbb E(\by_{22}^*)$ with these two estimators. However, we must take into account the rotational indeterminacy inherent with the factor models. 
  For a generic missing value $\by_{it}$,
\[\widehat\bb_{\mathrm{tall},i}=\bH_{\mathrm{tall}}'\bb_i+o_P(1), \quad \widehat\bff_{\mathrm{wide},t}=\bH_{\mathrm{wide}}^{-1}\bff_t+o_P(1).\]
Therefore
\[\bb_i'\bff_t=\widehat\bb_{\mathrm{tall},i}' \bA\widehat\bff_{\mathrm{wide},t}+o_P(1),\quad \bA:=  \bH_{\mathrm{tall}}^{-1}\bH_{\mathrm{wide}}  .\]
To estimate    $\bA$, by $\widehat\bff_{\mathrm{wide},t}=\bH_{\mathrm{wide}}^{-1}\bff_t+o_P(1)$ and $\widehat\bff_{\mathrm{tall},t}=\bH_{\mathrm{tall}}^{-1}\bff_t+o_P(1)$, we have $$\widehat\bff_{\mathrm{tall},t}=\bA\widehat\bff_{\mathrm{wide},t}+o_P(1).$$ So one can run the regression of $\widehat\bff_{\mathrm{tall},t}$ on $\widehat\bff_{\mathrm{wide},t}$ to  consistently estimate $\bA$. This leads to the following estimation procedure.
\begin{algo}
Block-rearrangement algorithm
\begin{description}
\item[Step 1] Obtain estimators  $(\widehat\bb_{\mathrm{wide}}, \widehat\bF_{\mathrm{wide}})$  using  the tall block of $\bY$.
\item[Step 2] Obtain estimators     $(\widehat\bb_{\mathrm{tall}}, \widehat\bF_{\mathrm{tall}})$ using the wide block of $\bY$.
\item[Step 3] Compute $\widehat \bC_{\mathrm{miss}}=\widehat\bB_{\mathrm{tall}}\bA\widehat\bF_{\mathrm{wide}}'$ where $\bA$ is obtained by regressing $\widehat\bff_{\mathrm{tall},t}$ on $\widehat\bff_{\mathrm{wide},t}$
\item[Step 4] Output $\widetilde \bY$, where $\widetilde y_{it}=y_{it}$ if  $y_{it}$ is observable;  $\widetilde y_{it}=\widehat c_{\mathrm{miss},it}$ if  $y_{it}$ is missing.
\end{description}
\end{algo}

Once $\widetilde\bY$ is obtained, we apply the PCA again to the imputed data $\widetilde\bY$ to get more efficient estimates of $\bB$ and $\bF$. Suppose the size of the ``tall'' block is $N\times T_0$ and the size of the ``wide'' block is $N_0\times T$. So the size of the ``bal'' block is $N_0\times T_0$. The whole sample size (including missing data points) is $N\times T$. Bai and Ng require that
$$
 \max\{\sqrt{N}, \sqrt{T}\} =o(N_0), \quad \text{ and } \quad  \max\{\sqrt{N}, \sqrt{T}\} =o(T_0).
$$
An implication of the above condition is that the missing data points should not be too frequent in the sense that the balanced subblock is large enough. Though this condition rules out the case of random missing (e.g., missing occurs as outcomes of Bernoulli trials), it is not stringent given the nature of endogenous missing.   

\subsection{A simulation study}\label{sec:5.sim}

We conduct a simulation study to compare six matrix completion approaches, namely:

\textbf{IPW.} The  inverse probability weighting.

\textbf{ReUW.} Unweighted regularization. The eigenvectors of the  estimator (\ref{eq5.1}).

\textbf{ReW.} Weighted regularization. The eigenvectors of the  estimator (\ref{eq5.2}).


\textbf{ReDebias.} The debiased regularized estimator  from Algorithm \ref{ag5.1}.

\textbf{EM.} The EM algorithm.

We generate a two-factor model where loadings, factors and $u_{it}$ are   independent standard normal. Under the homogeneous missing we generate  $x_{it}\sim$ Bernoulli$(0.5)$; under the heterogeneous missing we generate  $x_{it}|w_i\sim$ Bernoulli$(w_i)$, and $w_i\sim $Uniform$[0.1,1]$.  The three regularized methods require choosing $\lambda$, the tuning parameter. Write the penalized loss function to be $\|( \bW^{-1/2}\bY- \bW^{-1/2}\bM)\circ\bX\|_F^2+\lambda\|\bM\|_n$ where $\bW$ is a diagonal weighting matrix.  The theory requires that with a high probability, there is $c>0$,
$$
 (2+c)\|\bU\circ(\bW^{-1}\bX)\|<\lambda.
$$
So we set $\lambda$ to be the 0.95  quantile of $ 2.2\|\bZ\circ(\bW^{-1}\bX)\|$ where $\bZ$ is an $N\times T$ matrix of standard normal variables. In practice, one can also simulate $\bZ$ using the estimated idiosyncratic covariance matrix.

\begin{table}[h]
\tabcolsep7.5pt
\caption{Comparison among five matrix completion methods}
\label{tab1}
\begin{center}
\begin{tabular}{cc|ccccc}
\hline
\hline
$N$&$T$ &IPW  &ReUW&ReW&ReDebias2&EM \\
\hline
&&&&&\\
 & &\multicolumn{5}{c}{Homogeneous missing} \\
100 &200 & 0.176  &0.116    & 0.114 &  0.109&  0.109\\
200 &100  & 0.252 &0.171 & 0.169 & 0.161& 0.161\\
&&&&&\\
 & &\multicolumn{5}{c}{Heterogeneous missing} \\
100 &200 &0.263 &0.211& 0.131 &0.119& 0.119\\
200 &100  &0.369   &0.304  & 0.222 & 0.204 &0.203\\
\hline
\end{tabular}
\end{center}
\begin{tabnote}
Reported is $\|\bP_{\widehat\bB}-\bP_\bB\|$ averaged over 100 replications.
\end{tabnote}
\end{table}

 We compare the performance of estimating the loading space, measured by $\bP_{\bB}=\bB(\bB'\bB)^{-1}\bB'$. Table \ref{tab1} reports  $\|\bP_{\widehat\bB}-\bP_\bB\|$ averaged over 100 replications for each method.   In all scenarios, the IPW performs the worst among all estimators.  Under the homogeneous missing, all the other four methods perform similarly, but the difference is much more noticeable under the heterogeneous missing. The general ranking is that
  $$
\text{IPW}  \prec  \text{ReUW}  \prec \text{ReW} \prec\text{ReDebias} \approx \text{EM}.
$$
This ranking is as expected: IPW is the least efficient method among the five;  ReUW uses the nuclear-norm regularized estimation that does not take into account the heterogeneous missing or debias; ReW accounts for the heterogeneous missing probabilities, and ReDebias  further removes the regularization bias. 

 Finally, it is not surprising to see that ReDebias and EM perform similarly because both start with an initial low-rank estimator (ReDebias initializes from ReW while EM initializes from IPW), then proceed via iterative least squares. But we note that  ReDebias  operates much faster because it only iterates once, so is more attractive than EM in handling large scale problems. 
 We also implemented the ``early-stop-EM" (which only iterates twice), it performs only slightly better than IPW and is worse than all the other estimators. Therefore we conclude that the ReDebias is a recommended method for handling large scale low-rank matrix completion problems.

\section{Conclusion}

We have conducted  a selective overview on the recent developments of the factor model and its application on statistical learning.  We focus on the perspective of the low-rank structure of factor models,
and particularly draws attentions to estimating the model from the low-rank recovery point of view. New estimation and inference methods, and matrix completion problems have been discussed.  




\appendix

\section{Technical details}

\subsection{Proof of Theorem \ref{th1.1}}\label{sec:prof2.1}

\begin{proof} (i) The proof is an exercise of applying the eigen-perturbation theorem.
First, by  the triangular inequality, for $a_N:=2\eta_N  \|\bL\|+\eta_N^2+3\|\bL\|\|\bS\|$,
$$
\| \widehat\bSigma\widehat\bSigma' -\bL\bL'\|
\leq \|\bSigma\bSigma' -\widehat\bSigma\widehat\bSigma'\|
+\|\bSigma\bSigma'-\bL\bL'\|\leq O_P(a_N).
$$
 So by Weyl's theorem (cited in Theorem \ref{tha.2}),    $\max_{i\leq r+1}|\lambda_i^2  (\widehat\bSigma)-\lambda^2_i(\bL)|\leq O_P(a_N).$

Also,
$g_N^2:= \min_{2\leq i\leq r+1}| \lambda_{i-1}(\bL)-\lambda_i(\bL)|^2
 \gg a_N $ implies  $\lambda_r^2(\bL)\gg a_N$ and
 $$\min_{2\leq i\leq r+1}| \lambda_{i-1}^2(\bL)-\lambda^2_i(\bL)|
 \geq 2\lambda_r(\bL)\min_{2\leq i\leq r+1}| \lambda_{i-1}(\bL)-\lambda_i(\bL)|
 \gg a_N.$$
  So with probability approaching one,
\begin{eqnarray}\label{eqa2.3}
\max_{i\leq r}|\lambda_i (\widehat\bSigma)-\lambda_i(\bL)|&\leq&
\frac{\max_{i\leq r}|\lambda_i^2  (\widehat\bSigma)-\lambda^2_i(\bL)|}{\min_{i\leq r}\sqrt{2\lambda_i^2(\bL)-| \lambda_i^2  (\widehat\bSigma)-\lambda^2_i(\bL)  |}} =O_P\left(\frac{a_N}{\lambda_r(\bL)}\right).\\
\min_{i\leq r} | \lambda_{i-1}^2(\widehat\bSigma)-\lambda^2_i(\bL)|&\geq& \min_{2\leq i\leq r}| \lambda_{i-1}^2(\bL)-\lambda^2_i(\bL)|
  -| \lambda_{i-1}^2(\widehat\bSigma)-\lambda^2_{i-1}(\bL)|\cr
  &\geq& \frac{1}{2}\min_{2\leq i\leq r}| \lambda_{i-1}^2(\bL)-\lambda^2_i(\bL)|.
\end{eqnarray}

Similarly, for all $i\leq r-1$, $|\lambda^2_i(\bL)- \lambda^2_{i+1}(\widehat\bSigma)|\geq \frac{1}{2}\min_{2\leq i\leq r}| \lambda_{i-1}^2(\bL)-\lambda^2_i(\bL)|.$ Now for $i=r$,
 $\lambda_{r+1}^2(\widehat\bSigma)\leq \lambda^2_{r+1}(\bL)+ O_P(a_N)=O_P(a_N)$.
 So $|\lambda^2_r(\bL)- \lambda^2_{r+1}(\widehat\bSigma)|\geq \frac{1}{2}\lambda^2_r(\bL)$.
  Hence by the sing-theta theorem (cited in Theorem \ref{tha.2}),
  \begin{equation}\label{eqa2.4}
 \max_{i\leq r}\|\widehat\bxi_i-\bxi_i\|
 =O_P\left(\frac{a_N}{  g^2_N}\right).
\end{equation}
The right singular vectors have the same bound. Then (\ref{eqa2.3}) (\ref{eqa2.4}) together imply
$$
\|\widehat\bL-\bL\|=O_P\left(\frac{a_N\|\bL\|}{ g^2_N}\right) .
$$
 Finally, we note that $\|\bL\|= \sum_{i=2}^{r+1}\lambda_{i-1}(\bL)-\lambda_i(\bL)\leq rg_N$. So  $a_N=o_P(g_N^2)$ is satisfied as long as $\|\bS\|+\eta_N=o_P(g_N)$ and $a_N=O_P(g_N\eta_N+g_N\|S\|)$.

 (ii) The element-wise bound  is a corollary from the more general bound in Theorem \ref{eqtha.1}.
Using the notation of  Theorem \ref{eqtha.1}, under the assumptions that $\|\bSigma\|_\infty=O_P(1)$, $s_N=O_P(\sqrt{N_1})$. Also, $m_N=O_P(\frac{1}{\sqrt{N}}+\frac{1}{\sqrt{N_1}})$, $\|\bL\|_\infty = O_P(1) $, $N_1c_N=o_P(g_N)$.
Hence
  $$
  b_N\leq \left(   \frac{N_1}{\sqrt{N}}+  \sqrt{N_1}   \right)  g_N^{-2}  (\eta_N+\|\bS\|)
+\left(c_N \frac{ N_1}{\sqrt{N}}   +  c_N  \sqrt{N_1}   +\|\bS\bzeta_d \|_\infty \right) g_N^{-1}.
  $$

 \end{proof}

\subsection{Proof of Theorem \ref{th2.1}}\label{sec:3.1profd}

\begin{proof}   Let $Q(\bL, \bSigma_u)$ denote the loss function. Note that
	\begin{eqnarray*}
		\|\bS_y-\widehat\bL-\widehat\bSigma_u\|_F^2&=&\|\bS_y-\bSigma_y\|_F^2
		+\|\widehat \bL+\widehat \bSigma_u-\bL-\bSigma_u\|_F^2+
		M_1+M_2\cr
		M_1&=&2\tr((\bS_y-\bSigma_y)(\bL-\widehat \bL)')\cr
		M_2&=&2\tr((\bS_y-\bSigma_y)(\bSigma_u-\widehat\bSigma_u)').
	\end{eqnarray*}
	
	We now use two types of  inequalities to bound $M_1$ and $M_2$. As for $M_1$, note that $\bL$ is a low-rank matrix, we use the inequality
	$|\tr(\bA\bB')|\leq \|\bA\|\|\bB\|_n$. We thus have
	$$
	|M_1|\leq 2\|\bS_y-\bSigma_y\|\|\bL-\widehat \bL\|_n\leq 0.5\nu_1\|\bL-\widehat \bL\|_n.
	$$
	As for $M_2$, note that $\bSigma_u$ is a sparse matrix, we use the inequality $|\tr(\bA\bB')|\leq \|\bA\|_\infty\|\bB\|_1$,
	$$
	|M_2|\leq 2\|\bS_y-\bSigma_y\|_\infty \|\bSigma_u-\widehat\bSigma_u\|_1\leq 0.5\nu_2\|\bSigma_u-\widehat\bSigma_u\|_1.
	$$
	From Lemma 2.3 of \cite{recht2010guaranteed}, $\|\bL+\mathcal P(\bA_1)\|_n=\|\bL\|_n+\|\mathcal P(\bA_1)\|_n$ where  $\bA_1=\widehat\bL-\bL$. Also using the standard sparse argument, $\|\bSigma_u+(\bA_2)_{J^c}\|_1=\|\bSigma_u\|_1+\|(\bA_2)_{J^c}\|_1$ where  $\bA_2=\widehat\bSigma_u-\bSigma_u$.
	In addition, Lemma 1 of \cite{negahban2011estimation} shows $0.5\text{rank}(\mathcal M(\bA_1))\leq r:=\text{rank}(\bL)$.
	Hence
	\begin{eqnarray*}
		\|\widehat \bL\|_n&=&\|\bL+\mathcal P(\bA_1)+\mathcal M(\bA_1)\|_n
		\geq \|\bL\|_n+\|\mathcal P(\bA_1)\|_n-\|\mathcal M(\bA_1)\|_n\cr
		\|\widehat\bSigma_u\|_1&=&\|\bSigma_u+(\bA_2)_{J^c}+(\bA_2)_J\|_1
		\geq \|\bSigma_u\|_1+\|(\bA_2)_{J^c}\|_1-\|(\bA_2)_J\|_1,\cr
		\|\mathcal M(\bA_1)\|_n&\leq& \|\mathcal M(\bA_1)\|_F\sqrt{\text{rank}(\mathcal M(\bA_1))}
		\leq \sqrt{2r}\|\bA_1\|_F\cr
		\|(\bA_2)_J\|_1&\leq&\|\bA_2\|_F\sqrt{J+N}.
	\end{eqnarray*}
	Thus  $Q(\widehat\bL,\widehat\bSigma_u)\leq Q(\bL, \bSigma_u)$ implies
	\begin{eqnarray*}
		&&\|\bA_1+\bA_2\|_F^2+0.5\nu_1\|\mathcal P(\bA_1)\|_n
		+0.5\nu_2 \|(\bA_2)_{J^c}\|_{1}
		\leq
		1.5\nu_1\|\mathcal M(\bA_1)\|_n
		+1.5\nu_2\|(\bA_2)_{J}\|_1.
	\end{eqnarray*}
	As such, $(\bA_1, \bA_2)\in\mathcal C(\nu_1, \nu_2)$, and thus $\|\bA_1+\bA_2\|_F^2\geq\kappa(\nu_1, \nu_2)(\|\bA_1\|_F^2+\|\bA_2\|_F^2) $.
	\begin{eqnarray*}
		\kappa(\nu_1, \nu_2)(\|\bA_1\|_F^2+\|\bA_2\|_F^2)
		&\leq& 1.5\nu_1\|\mathcal M(\bA_1)\|_n
		+1.5\nu_2  \|  (\bA_2)_{J} \|_1\cr
		&\leq& \sqrt{4.5r}\nu_1\|\bA_1 \|_F
		+1.5\nu_2\sqrt{J+N}  \| \bA_2 \|_F.
	\end{eqnarray*}
	The last inequality then implies $\|\bA_1\|_F+\|\bA_2\|_F\leq \frac{4}{\kappa(\nu_1,\nu_2)} (\sqrt{4.5r}\nu_1+1.5\nu_2\sqrt{J+N} )$.
	
\end{proof}

\subsection{Proof of Theorem \ref{th2.2}}\label{sec:prof3.2}

\subsubsection{Proof of Theorem \ref{th2.2} (i)}
\begin{proof}
	First, we have $|\mathbb E\widetilde y_i-\mathbb E y_{it}|\leq \mathbb E(|y_{it}|-\tau_i)1\{|y_{it}|>\tau_i\}\leq 2\sigma\sqrt{\mathbb E|y_{it}|^q}\tau_i^{-q/2}$ for any $q\geq 2$.
	Meanwhile,  $\mathbb E|\widetilde y_{it}|^q\leq\tau_i^{q-2} \sigma^2 .$
	As such we can apply the Bernstein's inequality (Theorem \ref{tha.1})  to reach:
	$$
	\max_{i\leq N}P\left(\left|\widetilde y_i-\mathbb E\widetilde y_i\right|>\sqrt{\frac{2\sigma^2 x^2}{T}}+\frac{\tau_i x^2}{T}\right)<2\exp(-x^2).
	$$
	Then take $x=\sqrt{4\log N}$, by union bound, with probability at least $1-2N^{-3}$,
	$$
	\max_{i\leq N}\left|\widetilde y_i-\mathbb E\widetilde y_i\right|\leq \sqrt{\frac{8\sigma^2 \log N}{T}}+\frac{4\max_{i\leq N}\tau_i \log N}{T}.
	$$
The rest of the proof is conditioning on this event. Together we have
	$$
	\max_{i\leq N}\left|\widetilde y_i-\mathbb Ey_{it}\right|\leq \sqrt{\frac{8\sigma^2 \log N}{T}}+\max_{i\leq N} f(\tau_i) ,\quad f(\tau):=\frac{4\tau \log N}{T}+\frac{2\sigma\sqrt{   M}}{\tau^{q/2}}.
	$$
	Set $\tau_i\asymp\kappa(\sigma^2M)^{1/(2+q)} $, which is proportional to the minimizer of $f(\tau).$ Here   $\kappa:= \left(\frac{T}{\log N}\right)^{1/(1+q/2)}$. Because $\log N\leq CT$, we have $\kappa \sqrt{\frac{\log N}{T}}<C_2$ for some constant $C_2$ as long as $q\geq 2$. So
		$$
	\max_{i\leq N}\left|\widetilde y_i-\mathbb Ey_{it}\right|\leq \left(\sqrt{8} + M^{1/(2+q)} \kappa \sqrt{\frac{\log N}{T}} \right) \sigma  \sqrt{\frac{\log N}{T}}  \leq (M^{1/(2+q)}C_2+\sqrt{8}) \sigma  \sqrt{\frac{\log N}{T}} .
	$$

\end{proof}
\subsubsection{Proof of Theorem \ref{th2.2} (ii)}

\begin{proof}
Let $\mu_{i,\tau}:= \arg\min_{\mu} \E\psi_{\tau_i}(y_{it}-\mu)$.
We separately consider the ``variance" $\widehat y_i- \mu_{i,\tau}$ and the ``bias"  $\Delta_{i}:=\mu_{i,\tau}-\E y_{it}$. We denote by
$$\dot\psi_{\tau}(x):=\frac{d}{dx}\psi_{\tau}(x)
=\begin{cases} 2x\tau^{-2}, & |x|<\tau\\
2\sgn(x)\tau^{-1}, & |x|\geq \tau,
\end{cases}
$$
which is well defined because $\psi_{\tau}$ is first-order differentiable (but not twice).

\textbf{bias.} Bounding bias requires $\tau_i$ cannot grow too slowly.  Let $g_{\tau}(z):=z^2-\tau^2\psi_\tau(z)$ and $\frac{d}{dz}g_\tau(z)=2z-\tau^2\dot\psi_\tau(z)$.
Hence for $z_1=y_{it}-\mu_{i,\tau}$ and $z_2=y_{it}-\mathbb E y_{it}$. There is  $\widetilde \mu$ in between $\mu_{i,\tau}$ and $\E y_{it}$, for $\widetilde z=y_{it}-\widetilde \mu$, and $q>1$,
\begin{eqnarray*}
\E (z_1^2-z_2^2)&=&\E   g_{\tau_i}(z_1)-\E  g_{\tau_i}(z_2)+\tau_i^2\E  \psi_{\tau_i}(z_1)-\tau_i^2\E  \psi_{\tau_i}(z_2)
\leq\E   g_{\tau_i}(z_1)-\E  g_{\tau_i}(z_2)\cr
&=&\E \frac{d}{dz}g_\tau(\widetilde z)(\E y_{it}-\mu_{i,\tau})
\leq2 |   \Delta_i|\E 1\{|\widetilde z|\geq\tau\}  |\widetilde z|\leq 2|  \Delta_i   | \E  |\widetilde z|^{q}\tau_i^{-(q-1)}\cr
&\leq & 2|   \Delta_i| \E  |e_{it}+\mathbb E y_{it}-\widetilde \mu|^{q}\tau_i^{-(q-1)}\leq \tau_i^{-(q-1)} C|  \Delta_i|
[ \E  |e_{it}|^q+ |  \Delta_i|^{q}].
\end{eqnarray*}
On the other hand, $\E( z_1^2-z_2^2)= \Delta_i^2$. Hence $|\Delta_i|\leq \tau_i^{-(q-1)} C
[ \E  |e_{it}|^q+ |  \Delta_i|^{q}]$. Now consider two cases.

Case 1: $|\Delta_i|^q> \E  |e_{it}|^q$, then
$\E( z_1^2-z_2^2)= \Delta_i^2$. Hence $|\Delta_i|\leq 2 C\tau_i^{-(q-1)}  |  \Delta_i|^{q}$. This implies
 $ \tau_i \leq  C |  \Delta_i| \leq C|\mu_{i,\tau}+\E y_{it}|$ which contradicts with $\tau_i\to \infty.$

Case 2:  $|\Delta_i|^q< \E  |e_{it}|^q$, then $|\Delta_i|\leq C\tau_i^{-(q-1)}
  \E  |e_{it}|^q\to0 $ if $ \E  |e_{it}|^q<\infty.$

\textbf{Variance.} Bounding variance requires $\tau_i$ cannot grow too fast.  Denote the loss function
$Q_i(\mu):=\frac{1}{T}\sum_{t=1}^T\psi_{\tau_i}(y_{it}-\mu)$. Fix $m_T=\sqrt{\frac{\log N}{T}}$. We aim to show   there is $\delta>0$,  so that
\begin{equation}\label{eqa.1}
P\left(\inf_{|\nu|=\delta}\min_{i\leq N}Q_i(\mu_{i,\tau}+m_T\nu)-Q_i(\mu_{i,\tau})>0\right)>1-4N^{-3},
\end{equation}
which then implies with probability at least $1- 4N^{-3}$, $\max_{i\leq N}|\widehat y_i-\mu_{i,\tau}|\leq m_T\delta$. To prove (\ref{eqa.1}), note
$\E \dot\psi_{\tau_i}(e_{it,\tau})=0$ where $e_{it,\tau}:=y_{it}-   \mu_{i,\tau}.$ Now let $e_{it}=y_{it}-\E y_{it}$, then $e_{it}=  e_{it,\tau}+\Delta_{i}$, where $\Delta_{i}=\mu_{i,\tau}-\E y_{it}$.
It can be verified that for any $x, x_1, x_2$,
\begin{eqnarray*}
\dot\psi_{\tau_i}(e_{it,\tau}+x)-\dot\psi_{\tau_i}(e_{it,\tau})&=&2x\tau_i^{-2}+ a_{it}(x)b_{it}(x) \cr
\psi_{\tau}(e_{it,\tau}+x)- \psi_{\tau}(e_{it,\tau})&=&\dot\psi_{\tau}(e_{it,\tau})x+\int_0^x[\dot\psi_{\tau}(e_{it,\tau}+z)-\dot\psi_{\tau}(e_{it,\tau})]dz\cr
|\dot\psi_{\tau}(x_1)-\dot\psi_{\tau}(x_2)|&\leq& 2\tau^{-2}|x_1-x_2|,
\end{eqnarray*}
where $a_{it}(x)=\dot\psi_{\tau_i}(e_{it,\tau}+x)-\dot\psi_{\tau_i}(e_{it,\tau})-2x\tau_i^{-2} $ and $b_{it}(x)=1\{ |e_{it,\tau}+x|\vee |e_{it,\tau}|\geq \tau_i\}$;  $a\vee b=\max\{a,b\}.$ Also, $|a_{it}(x)|\leq 4|x|\tau_i^{-2}$.
Applying  these results with $x= -m_T\nu$, we have
\begin{eqnarray*}
&&Q_i(\mu_{i,\tau}+m_T\nu)-Q_i(\mu_{i,\tau})=\frac{1}{T}\sum_{t=1}^T\psi_{\tau_i}(e_{it,\tau}+x)
-\psi_{\tau_i}(e_{it,\tau})\cr
&=&\frac{1}{T}\sum_{t=1}^T \dot\psi_{\tau_i}(e_{it,\tau})x
+\frac{1}{T}\sum_{t=1}^T\int_0^x 2z\tau_i^{-2} dz
+\frac{1}{T}\sum_{t=1}^T\int_0^x a_{it}(z)b_{it}(z)dz1\{x>0\}
-\frac{1}{T}\sum_{t=1}^T\int_x^0 a_{it}(z)b_{it}(z)dz1\{x\leq 0\}\cr
&\geq&   x^2\tau_i^{-2}
-x\tau_i^{-2}\cdot\underbrace{\tau_i^{2}\max_{i\leq N} \left|\frac{1}{T}\sum_{t=1}^T \dot\psi_{\tau_i}(e_{it,\tau})\right|}_{I}
-\tau_i^{-2}\underbrace{ \max_{i\leq N} 8 \frac{1}{T}\sum_{t=1}^T\int_0^{|x|}  z b_{it}(z)dz}_{II}.
\end{eqnarray*}
To bound $I$, we apply the Bernstein inequality.  $| \dot \psi_{\tau_i}(e_{it,\tau})|\leq 2\min\{(|e_{it}|+|\Delta_{i}|)\tau_i^{-2},\tau_i^{-1}\} $. Hence
$\E  \psi_{\tau_i}(e_{it,\tau})^2\leq 8 \tau_i^{-4} ( \sigma^2+1)$ and
$\E | \psi_{\tau_i}(e_{it,\tau})|^k\leq 8 \tau_i^{-4} ( \sigma^2+1)(\frac{2}{\tau_i})^{k-2}$
 where we used $|\Delta_{i}|<1$, and $\E|x|^k\leq \E x^2 C^{k-2}$ if $|x|<C.$ Hence by
Theorem \ref{tha.1},
$$
	\max_iP\left(\left|\frac{1}{T}\sum_{t=1}^T  \dot\psi_{\tau_i}(e_{it,\tau})\right|>\sqrt{\frac{16 \tau_i^{-4} ( \sigma^2+1) h^2}{T}}+\frac{2h^2}{\tau_iT}\right)<2\exp(-h^2).
	$$
Take $h^2=4\log N$, by the union bound, with probability at least $1-2N^{-3}$,
$$
I\leq \sqrt{\frac{64   ( \sigma^2+1) \log N}{T}}+\frac{8\tau_i\log N}{ T}\leq C(\sigma+1)m_T
$$
where $C$ does not depend on $i$. The last inequality holds for $\tau_i\asymp m_T^{-1}$.

To bound $II$, note
$b_{it}(z)\leq 2\times 1\{ |e_{it}|>\tau_i/4\} +2\times 1\{|\Delta_{i}|>\tau_i/4\}+1\{|z|>\tau_i/2\}$. Also,
when $|z|\leq |x|\leq |\nu|m_T\to0$, we have  $1\{|z|>\tau_i/2\}=0$ and
$1\{|\Delta_{i}|>\tau_i/4\}=0$  because $m_T\to 0$.
We     apply Hoeffding inequality, with probability at least $1-2N^{-3}$,
\begin{eqnarray*}
II&\leq& \max_{i\leq N} 8  x^2\frac{1}{T}\sum_{t=1}^T  1\{ |e_{it}|>\tau_i/4\}
\leq \max_{i\leq N} 8 x^2\left( 2 \sqrt{\frac{\log N}{T}}+  P(   |e_{it}|>\tau_i/4 )  \right)\cr
&\leq& \max_{i\leq N} 32x^2\left(  \sqrt{\frac{\log N}{T}}+  \frac{\sigma^2}{\tau_i^2} \right)
\leq \frac{1}{4}x^2.
  \end{eqnarray*}
Together, $Q_i(\mu_{i,\tau}+m_T\nu)-Q_i(\mu_{i,\tau})
\geq \tau_i^{-2}m_T^2|\nu|(\frac{3}{4} |\nu| -C(\sigma+1)) >0.
$ This inequality holds uniformly for all   $|\nu|=4C(\sigma+1)$ and $i\leq N$. Hence with probabiliy at least $1-4N^{-3}$,
$\max_{i\leq N}|\widehat y_i-\mu_{i,\tau}|\leq m_T4C(\sigma+1).$

Combine both the bias and variance parts, for $q\geq 2$ and $\E |e_{it}|^q<C$,
$$
\max_{i\leq N}|\widehat y_i-\E y_{it}|\leq m_T4C(\sigma+1)+ C \E |e_{it}|^q m_T^{q-1}
\leq 8Cm_T(\sigma+1).
$$

\end{proof}

\subsection{Some inequalities}

The following theorem is adapted from Theorem 2.10 of \cite{boucheron2013concentration}.
\begin{thm}[Bernstein inequality]\label{tha.1}
	Let $X_1,...,X_T$ be an independent sequence with $\frac{1}{T}\sum_{t=1}^T\mathbb E X_t^2<\sigma^2$ and $\frac{1}{T}\sum_{t=1}^T\mathbb E |X_t|^q\leq \frac{q!}{2} \sigma^2c^{q-2}$ for all integers  $q>2$, with constants $(\sigma^2,c)$. Then for all $x>0$,
	$$
	P\left(\left|\frac{1}{T}\sum_{t=1}^T(X_t-\mathbb E X_t)\right|>\sqrt{\frac{2\sigma^2 x^2}{T}}+\frac{cx^2}{T}\right)<2\exp(-x^2).
	$$
	
\end{thm}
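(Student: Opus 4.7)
The plan is to execute the classical Cram\'er-Chernoff exponential moment method, of which this theorem is a standard corollary (cf.\ Theorem 2.10 of \cite{boucheron2013concentration}). First I pass to the centered variables $Y_t := X_t - \mathbb{E} X_t$ and set $S_T := \sum_{t=1}^T Y_t$. The variance hypothesis on $\mathbb{E} X_t^2$ immediately controls $\sum_t \mathrm{Var}(X_t) \le T\sigma^2$, and the higher-moment hypothesis on $\mathbb{E}|X_t|^q$ transfers to $\mathbb{E}|Y_t|^q$ by a triangle-inequality plus Jensen argument $\mathbb{E}|Y_t|^q \le 2^q \mathbb{E}|X_t|^q$, at the cost of an absolute-constant rescaling of $(\sigma^2, c)$. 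By the symmetry argument applied to $-Y_t$ and a union bound at the end, it suffices to establish the one-sided tail $\mathbb{P}(S_T > s) \le \exp(-x^2)$ for $s := \sqrt{2T\sigma^2 x^2} + c x^2$, after which the factor of $2$ in the two-sided bound follows.

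The heart of the argument is the moment generating function estimate
\[
\log \mathbb{E} e^{\lambda S_T} \;\le\; \psi(\lambda) \;:=\; \frac{T\sigma^2 \lambda^2/2}{1 - c\lambda}, \qquad \lambda \in [0, 1/c).
\]
For a single $t$, using $\log(1+u) \le u$ and $\mathbb{E} Y_t = 0$ gives $\log \mathbb{E} e^{\lambda Y_t} \le \sum_{q \ge 2} \lambda^q \mathbb{E}|Y_t|^q/q!$. Summing over $t$ by independence and inserting the averaged Bernstein bound $\sum_t \mathbb{E}|Y_t|^q \le (q!/2) T\sigma^2 c^{q-2}$ collapses the tail into the geometric series $\sum_{q\ge 2}(c\lambda)^{q-2} = (1-c\lambda)^{-1}$ and yields the displayed inequality. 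Markov's inequality applied to $e^{\lambda S_T}$ then gives the Chernoff tail $\mathbb{P}(S_T > s) \le \exp(-\lambda s + \psi(\lambda))$ for every admissible $\lambda$.

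The final step is an exact inversion of the Legendre transform $\psi^\ast(s) := \sup_\lambda(\lambda s - \psi(\lambda))$ at the specified threshold. Setting $u := \sqrt{2 x^2/(T\sigma^2)}$ and $\lambda_\star := u/(1+cu) \in (0, 1/c)$, so that $1 - c\lambda_\star = (1+cu)^{-1}$ and $T\sigma^2 u^2/2 = x^2$, a direct computation yields
\[
\lambda_\star s - \psi(\lambda_\star) \;=\; \frac{T\sigma^2 u^2(1+cu/2)}{1+cu} - \frac{T\sigma^2 u^2}{2(1+cu)} \;=\; \frac{T\sigma^2 u^2}{2} \;=\; x^2
\]
precisely when $s = T\sigma^2 u(1 + cu/2) = \sqrt{2T\sigma^2 x^2} + c x^2$. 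Hence $\mathbb{P}(S_T > s) \le \exp(-x^2)$; symmetry and the union bound produce $2\exp(-x^2)$, and dividing the threshold by $T$ restates the event in terms of $T^{-1}|\sum_t (X_t - \mathbb{E} X_t)|$.

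The main obstacle I anticipate is this exact algebraic inversion. The familiar Bernstein lower bound $\psi^\ast(s) \ge s^2/[2(T\sigma^2 + cs)]$, obtained from the naive critical value $\lambda = s/(T\sigma^2 + cs)$, is too lossy here: substituting $s = \sqrt{2T\sigma^2 x^2} + c x^2$ gives $s^2/[2(T\sigma^2 + cs)] = x^2 - c^2 x^4/[2(T\sigma^2+cs)]$, and the deficit $c^2 x^4/[2(T\sigma^2 + cs)]$ is not absorbable into the two-sided factor of $2$ in the regime $T\sigma^2 \ll c^2 x^2$. Identifying the sharper optimizer $\lambda_\star = u/(1+cu)$ and verifying the identity $\lambda_\star s - \psi(\lambda_\star) = x^2$ exactly, rather than only up to a gap, is thus the technically critical step; the centering transfer and the routine MGF computation surrounding it are standard.
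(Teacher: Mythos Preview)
The paper does not supply its own proof of this statement: it is quoted as an auxiliary inequality ``adapted from Theorem 2.10 of \cite{boucheron2013concentration}'' and then used as a black box in the proofs of Theorems \ref{th1.1} and \ref{th2.2}. Your proposal is precisely the Cram\'er--Chernoff argument from that reference, and the exact Legendre inversion via $\lambda_\star = u/(1+cu)$ that you carry out is correct and is indeed what is needed to land on the stated threshold $\sqrt{2\sigma^2 x^2/T}+cx^2/T$ with exponent exactly $-x^2$.

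One caveat worth flagging: your centering step passes the moment hypothesis from $X_t$ to $Y_t=X_t-\mathbb E X_t$ via $\mathbb E|Y_t|^q\le 2^q\mathbb E|X_t|^q$, which you note costs ``an absolute-constant rescaling of $(\sigma^2,c)$''. But the rest of your computation then proceeds with the original constants, so as written the argument does not quite deliver the theorem with the constants exactly as stated. To close this, either (i) read the Bernstein moment condition as holding for the centered variables (which is how the result is typically formulated and how it is used in the paper's applications, where the relevant score functions are already centered), or (ii) follow Boucheron--Lugosi--Massart and use the one-sided pointwise bound $e^{u}-1-u\le u^2/2+\sum_{q\ge 3}(u)_+^q/q!$, which requires only the positive-part moments and avoids the $2^q$ loss for the upper tail. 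This is a bookkeeping issue rather than a structural gap; the core MGF bound and the sharp Chernoff optimization are sound.
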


	\begin{thm}[Eigen-purturbation bounds]\label{tha.2} Let $\lambda_1\geq...\lambda_R$ and $\widehat\lambda_1\geq...\widehat\lambda_R$ respectively be the eigenvectors of $N\times N$ semi-positive definite matrices $\bA$ and $\widehat\bA$, where $R<N$. Also, let $(\bxi_1,...,\bxi_R)$ and $(\widehat\bxi_1,...,\widehat\bxi_R)$  be corresponding eigenvectors.
				Then
				
 (i) Sin-theta theorem:
 $$
\max_{i\leq R} \|\widehat\bxi_i-\bxi_i\|\leq \frac{\|\widehat\bA-\bA\|}{\min_{i\leq R}\min\{|\widehat\lambda_{i-1}-\lambda_i|, |\lambda_i-\widehat\lambda_{i+1}|\}}.
 $$

 (ii) Weyl's theorem:
 $$
 \max_{i\leq R} \|\widehat\lambda_i-\lambda_i\|\leq \|\widehat\bA-\bA\|.
 $$
\end{thm}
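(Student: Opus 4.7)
The plan is to prove (ii) first, as it is a one-line consequence of the Courant--Fischer min--max theorem and supplies the variational language needed for (i). Setting $\bE := \widehat\bA - \bA$, any unit vector $x$ satisfies $|x'\bE x| \leq \|\bE\|$, so the maps $V \mapsto \min_{x \in V, \|x\|=1} x'\bA x$ and $V \mapsto \min_{x \in V, \|x\|=1} x'\widehat\bA x$ differ pointwise by at most $\|\bE\|$. Taking the maximum over $i$-dimensional subspaces $V$ preserves this two-sided inequality, giving $|\widehat\lambda_i - \lambda_i| \leq \|\bE\|$ and proving (ii).

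For (i), I would work in the spectral basis of $\widehat\bA$. Since $\widehat\bA$ is symmetric, the full eigenbasis $\{\widehat\bxi_j\}_{j=1}^N$ is orthonormal, so I can expand $\bxi_i = \sum_{j=1}^N c_{ij}\widehat\bxi_j$ with $c_{ij} := \widehat\bxi_j'\bxi_i$. Acting with $\widehat\bA$ on this expansion and using $\bA\bxi_i = \lambda_i \bxi_i$ together with $\widehat\bA \widehat\bxi_j = \widehat\lambda_j \widehat\bxi_j$ gives the key identity
$$\sum_{j=1}^N c_{ij}(\widehat\lambda_j - \lambda_i)\widehat\bxi_j = (\widehat\bA - \bA)\bxi_i.$$
Taking the squared norm of both sides, applying orthonormality on the left and $\|\bE\bxi_i\| \leq \|\bE\|$ on the right, yields $\sum_{j \neq i} c_{ij}^2(\widehat\lambda_j - \lambda_i)^2 \leq \|\bE\|^2$.

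The eigengap argument then bounds each $|\widehat\lambda_j - \lambda_i|$ from below by $g := \min_{i\leq R}\min\{|\widehat\lambda_{i-1} - \lambda_i|, |\lambda_i - \widehat\lambda_{i+1}|\}$. Since the $\widehat\lambda_j$ are sorted in decreasing order, for $j \leq i-1$ the nearest candidate to $\lambda_i$ is $\widehat\lambda_{i-1}$, and for $j \geq i+1$ it is $\widehat\lambda_{i+1}$; together these imply $|\widehat\lambda_j - \lambda_i| \geq g$ for every $j \neq i$. Consequently $\sin^2\theta_i := 1 - c_{ii}^2 = \sum_{j\neq i} c_{ij}^2 \leq \|\bE\|^2/g^2$. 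Choosing the sign of $\widehat\bxi_i$ so that $c_{ii} \geq 0$, the elementary identity $\|\bxi_i - \widehat\bxi_i\|^2 = 2(1-c_{ii}) \leq 2(1-c_{ii}^2) = 2\sin^2\theta_i$ finishes the argument, delivering $\|\widehat\bxi_i - \bxi_i\| \leq \sqrt{2}\|\bE\|/g$, which matches the stated bound up to a universal constant.

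The main obstacle is the interlacing step underpinning the gap argument: the claim $\min_{j \neq i}|\widehat\lambda_j - \lambda_i| \geq g$ tacitly requires that $\lambda_i$ lies in the interval $[\widehat\lambda_{i+1}, \widehat\lambda_{i-1}]$, which is not automatic and must be derived from (ii) together with the implicit assumption that $g$ exceeds $\|\bE\|$ (otherwise the bound is vacuous). I would therefore isolate this interlacing as a small lemma and make the sign choice for $\widehat\bxi_i$ explicit, so that (i) becomes a clean, self-contained consequence of (ii) and Parseval's identity applied to the expansion of $\bxi_i$ in the perturbed eigenbasis.
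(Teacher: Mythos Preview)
The paper does not prove this theorem; it is merely stated in the appendix under ``Some inequalities'' as a classical tool and then invoked (via Weyl and sin-theta) inside the proof of Theorem~\ref{th1.1}. So there is no paper proof to compare against.

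Your argument is the standard textbook route and is correct. The Courant--Fischer derivation of (ii) is exactly right. For (i), expanding $\bxi_i$ in the orthonormal basis $\{\widehat\bxi_j\}$ and reading off $\sum_{j\neq i} c_{ij}^2(\widehat\lambda_j-\lambda_i)^2 \leq \|\bE\|^2$ is the classical Davis--Kahan computation. You correctly flagged the one genuine subtlety: the step $\min_{j\neq i}|\widehat\lambda_j-\lambda_i|\geq g$ needs $\lambda_i\in[\widehat\lambda_{i+1},\widehat\lambda_{i-1}]$, and your proposed fix---observe via (ii) that failure of this interlacing forces $g\leq\|\bE\|$, making the bound trivially true---is exactly how one closes it. The resulting $\sqrt{2}$ factor versus the paper's stated constant is harmless, since the theorem is only ever used inside $O_P(\cdot)$ arguments in the paper.
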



Next,  we prove a   general  element-wise deviation bound for singular vectors.
  We consider the model as described in Theorem \ref{th1.1}. Let $\widehat\bzeta$ and $\bzeta$ be the $N_1\times r$ matrices of  right singular vector of $\widehat\bSigma$ and $\bL$, and let $\widehat\bxi$ and $\bxi$ be the left singular vectors.
  \begin{thm}\label{eqtha.1}
Let  $c_N:=\|\widehat\bSigma-\bSigma\|_\infty$, $\eta_N:=\|\widehat\bSigma-\bSigma\|$, $s_N^2:=\max_{i\leq N}\sum_{k\leq N_1} \Sigma_{ik}^2$, $m_N= \|\bzeta \|_\infty \vee \|\bxi\|_\infty$ and $g_N:= \min_{2\leq i\leq r+1}| \lambda_{i-1}(\bL)-\lambda_i(\bL)|$.
   Suppose $N_1c_N=o_P(g_N)$.
Then  for $b_N:=(s_N +N_1\| \bL\|_\infty m_N)  g_N^{-2}  (\eta_N+\|\bS\|)
+( N_1c_N m_N+\|\bS\bzeta_d \|_\infty) g_N^{-1},$
$$
\|\widehat\bxi-\bxi\|_\infty+  \|\widehat  \bzeta- \bzeta \|_\infty
\leq O_P(b_N).
$$
  \end{thm}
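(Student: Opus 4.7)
The plan is to exploit the SVD identity $\widehat\bSigma \widehat\bzeta = \widehat\bxi \widehat\bD$ together with the $\ell_2$ perturbation bounds already established in Theorem \ref{th1.1}(i). Writing the SVD of $\bL$ as $\bL = \bxi \bD_L \bzeta'$ and letting $\bE := \widehat\bSigma - \bSigma$, I would substitute $\widehat\bSigma = \bL + \bS + \bE$ to obtain $\widehat\bxi \widehat\bD = \bxi \bD_L \bzeta'\widehat\bzeta + (\bS+\bE)\widehat\bzeta$. Defining the rotation $\bH := \bD_L \bzeta' \widehat\bzeta \widehat\bD^{-1}$ then yields the key identity
$$\widehat\bxi - \bxi \bH = (\bS + \bE)\widehat\bzeta \widehat\bD^{-1}.$$
Weyl's inequality, the eigen-gap assumption, and $\eta_N + \|\bS\| = o_P(g_N)$ give $\min_j \widehat\bD_{jj} \asymp g_N$ with probability approaching one, so the problem reduces to bounding $\|(\bS + \bE)\widehat\bzeta\|_\infty$ and dividing by $g_N$.

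Next, I would split $\widehat\bzeta = \bzeta_d + (\widehat\bzeta - \bzeta_d)$, where $\bzeta_d$ is the rotated target controlled by the sin-$\theta$ bound so that $\|\widehat\bzeta - \bzeta_d\|_F = O_P((\eta_N + \|\bS\|)/g_N)$ and $\|\bzeta_d\|_\infty = O(m_N)$. The leading piece $(\bS + \bE)\bzeta_d$ is bounded entrywise by $\|\bS\bzeta_d\|_\infty$ (which appears verbatim in $b_N$) plus $\|\bE \bzeta_d\|_\infty = O_P(N_1 c_N m_N)$ via the elementwise bound $\|\bE\|_\infty \leq c_N$ summed across the $N_1$ rows of $\bzeta_d$. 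This contributes the $(N_1 c_N m_N + \|\bS\bzeta_d\|_\infty) g_N^{-1}$ block of $b_N$.

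For the remainder $(\bS + \bE)(\widehat\bzeta - \bzeta_d)$, I would apply a row-wise Cauchy--Schwarz: $|[(\bS + \bE)(\widehat\bzeta - \bzeta_d)]_{ij}| \leq \|(\bS + \bE)_{i\cdot}\|_2 \, \|\widehat\bzeta - \bzeta_d\|_F$. Decomposing $\bS_{i\cdot} = \bSigma_{i\cdot} - \bL_{i\cdot}$ controls the first term by $s_N$; for $\bL_{i\cdot}$ one uses the elementwise bound $|L_{ik}| \leq \|\bL\|_\infty$ combined with the incoherence $\|\bxi\|_\infty, \|\bzeta\|_\infty = O(m_N)$ to arrive at $\max_i \|(\bS+\bE)_{i\cdot}\|_2 = O(s_N + N_1 \|\bL\|_\infty m_N)$, where the $\sqrt{N_1} c_N$ contribution from $\bE$ is absorbed into the $N_1 c_N m_N$ term under the hypothesis $N_1 c_N = o_P(g_N)$. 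Multiplying by $\|\widehat\bzeta - \bzeta_d\|_F / g_N$ delivers the $(s_N + N_1 \|\bL\|_\infty m_N) g_N^{-2}(\eta_N + \|\bS\|)$ group of $b_N$. The corresponding bound on $\|\widehat\bzeta - \bzeta\bH'\|_\infty$ then follows by the symmetric argument, starting from the transposed identity $\widehat\bSigma' \widehat\bxi = \widehat\bzeta \widehat\bD$ and replacing $\bS\bzeta_d$ by $\bS'\bxi_d$.

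The main obstacle will be coordinating the rotations so that the sin-$\theta$ conclusion (which yields $\bzeta_d$ and $\bxi_d$) and the identity-based decomposition (which produces $\bH$) are consistent with a single global alignment, and then extracting the combinatorial factor $N_1 \|\bL\|_\infty m_N$ from the row-norm estimate rather than the looser $\sqrt{N_1} \|\bL\|_\infty$ that a naive bound would give. The condition $N_1 c_N = o_P(g_N)$ plays precisely the dual role of keeping $\widehat\bD$ well separated from zero at rate $g_N$ and of guaranteeing that the mixed error terms involving $\bE$ are subdominant to the leading $\bS$-driven expressions in $b_N$.
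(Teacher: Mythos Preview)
Your approach differs substantively from the paper's, and two of the steps do not quite land. The paper works column by column via the scalar identities $\widehat\bxi_d = \lambda_d^{-1}(\widehat\bSigma)\widehat\bSigma\widehat\bzeta_d$ and $\bxi_d = \lambda_d^{-1}(\bL)\bL\bzeta_d$, which avoids any rotation matrix entirely. The difference $\widehat\bxi_d - \bxi_d$ then splits into three pieces $I+II+III$; term $III$ (the eigenvalue perturbation $\lambda_d^{-1}(\widehat\bSigma) - \lambda_d^{-1}(\bL)$ times $\bL\bzeta_d$, bounded via $\|\bL\bzeta_d\|_\infty \leq N_1\|\bL\|_\infty m_N$) is precisely what produces the $N_1\|\bL\|_\infty m_N\, g_N^{-2}(\eta_N+\|\bS\|)$ block of $b_N$ that you are trying to manufacture from the row norm of $\bS+\bE$. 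Your argument there is not correct as written: incoherence gives $\|\bL_{i\cdot}\|_2 = \|\bD_L\bxi_{i\cdot}\|_2 \leq \sqrt{r}\,\|\bL\|\,m_N = O(g_N m_N)$, not $N_1\|\bL\|_\infty m_N$, so this term does not emerge from the row-wise Cauchy--Schwarz bound.

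The essential device you are missing is a self-bounding argument. In the paper, term $I = \lambda_d^{-1}(\widehat\bSigma)\widehat\bSigma(\widehat\bzeta_d - \bzeta_d)$ is split via $\widehat\bSigma = (\widehat\bSigma - \bSigma) + \bSigma$; the $\bSigma$-part gives the $s_N g_N^{-2}(\eta_N+\|\bS\|)$ piece by Cauchy--Schwarz and the $\ell_2$ sin-$\theta$ bound, but the $(\widehat\bSigma-\bSigma)$-part is kept in $\ell_\infty$, yielding the \emph{coupled} inequality $\|\widehat\bxi_d - \bxi_d\|_\infty \leq O_P(N_1 c_N g_N^{-1})\,\|\widehat\bzeta_d - \bzeta_d\|_\infty + O_P(b_N)$. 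The symmetric estimate holds with roles swapped, and adding them gives $\Delta \leq O_P(N_1 c_N g_N^{-1})\Delta + O_P(b_N)$ for $\Delta := \|\widehat\bxi_d - \bxi_d\|_\infty + \|\widehat\bzeta_d - \bzeta_d\|_\infty$. It is exactly the hypothesis $N_1 c_N = o_P(g_N)$ that makes this contraction close --- that is its primary role, not merely absorbing stray $\bE$ terms. Your attempt to bypass the coupling by reducing immediately to the $\ell_2$ sin-$\theta$ bound is what forces you to confront the rotation alignment you flag as an obstacle; the paper's column-wise scalar decomposition simply never introduces that rotation.
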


  \begin{proof}  Let $\widehat\bzeta_d$ and $\bzeta_d$ be the $N_1\times 1$ vector of the  $d$ th right singular vector of $\widehat\bSigma$ and $\bL$, for some $d\leq r$. By definition, $\bxi_d=\lambda_d^{-1}(\bL)\bL\bzeta_d$ and $\widehat   \bxi_d=\lambda_d^{-1}(\widehat\bSigma)\widehat\bSigma\widehat  \bzeta_d$.  So
$\|\widehat\bxi_d-\bxi_d\|_\infty\leq I+II+III$, where
\begin{eqnarray*}
I&:=&
  \|\lambda_d^{-1}(\widehat\bSigma)\widehat\bSigma(\widehat  \bzeta_d- \bzeta_d) \|_\infty
\cr
II&:=&\|\lambda_d^{-1}(\widehat\bSigma)(\widehat\bSigma -\bL)\bzeta_d \|_\infty\cr
III&:=&\|(\lambda_d^{-1}(\widehat\bSigma)- \lambda_d^{-1}(\bL))  \bL   \bzeta_d \|_\infty.
\end{eqnarray*}
We shall use  $\|\bA\bb\|_\infty\leq\min\{ \|\bA\|_\infty\|\bb\|_\infty N_1, \max_{i\leq N}\|\bA'_i\|\|\bb\|, \|\bA\|\|\bb\|\}$
 for $\bb\in\mathbb R^{N_1}$.
Also, part (i) shows $ \|\widehat  \bzeta_d- \bzeta_d \|=O_P(\frac{ \eta_N+\|\bS\|}{  g_N})$, the same bound as the left singular vectors.
\begin{eqnarray*}
I&\leq&
N_1 \lambda_d^{-1}(\widehat\bSigma)  \|\widehat\bSigma-\bSigma\|_\infty\|\widehat  \bzeta_d- \bzeta_d \|_\infty
+\|\lambda_d^{-1}(\widehat\bSigma)\bSigma(\widehat  \bzeta_d- \bzeta_d) \|_\infty
\leq O_P(N_1g_N^{-1}c_N)\|\widehat  \bzeta_d- \bzeta_d \|_\infty
\cr
&&+O_P(g_N^{-1})  \|\widehat  \bzeta_d- \bzeta_d \|\sqrt{\max_{i\leq N}\sum_{k\leq N_1} \Sigma_{ik}^2}
=O_P(N_1g_N^{-1}c_N)\|\widehat  \bzeta_d- \bzeta_d \|_\infty+O_P(g_N^{-2}s_N (\eta_N+\|\bS\|)).\cr
II&\leq& \lambda_d^{-1}(\widehat\bSigma) N_1\|\widehat\bSigma -\bSigma\|_\infty  \|\bzeta_d \|_\infty
+\lambda_d^{-1}(\widehat\bSigma)\|\bS\bzeta_d \|_\infty
\leq O_P(g_N^{-1} N_1c_N  \|\bzeta_d \|_\infty)
+O_P(g_N^{-1} \|\bS\bzeta_d \|_\infty )\cr
III&\leq&\|\lambda_d^{-1}(\widehat\bSigma)- \lambda_d^{-1}(\bL) \|  N_1\| \bL\|_\infty\|   \bzeta_d \|_\infty
\leq O_P(g_N^{-2}(\eta_N+\|\bS\|)) N_1\| \bL\|_\infty\|   \bzeta_d \|_\infty.
\end{eqnarray*}  Together,
$
\|\widehat\bxi_d-\bxi_d\|_\infty
\leq O_P(N_1g_N^{-1}c_N)\|\widehat  \bzeta_d- \bzeta_d \|_\infty+ O_P(b_N).
$
Similarly, $ \|\widehat  \bzeta_d- \bzeta_d \|_\infty
\leq O_P(N_1g_N^{-1}c_N)      \|\widehat\bxi_d-\bxi_d\|_\infty  + O_P(b_N)    $.
Hence for $\Delta:=\|\widehat\bxi_d-\bxi_d\|_\infty+  \|\widehat  \bzeta_d- \bzeta_d \|_\infty$, we have
$\Delta \leq O_P(N_1g_N^{-1}c_N)\Delta +O_P(b_N).
 $ Because $N_1g_N^{-1}c_N=o_P(1)$, we have $\Delta= O_P(b_N).$

\end{proof}

 \bibliographystyle{ims}
  \bibliography{liaoBib}

\end{document}